\PassOptionsToPackage{unicode}{hyperref}
\PassOptionsToPackage{hyphens}{url}
\PassOptionsToPackage{dvipsnames,svgnames,x11names}{xcolor}
\documentclass[
  12pt]{article}

\usepackage{amsmath,amssymb}
\usepackage{iftex}
\ifPDFTeX
  \usepackage[T1]{fontenc}
  \usepackage[utf8]{inputenc}
  \usepackage{textcomp} 
\else 
  \usepackage{unicode-math}
  \defaultfontfeatures{Scale=MatchLowercase}
  \defaultfontfeatures[\rmfamily]{Ligatures=TeX,Scale=1}
\fi
\usepackage{lmodern}
\ifPDFTeX\else  
\fi
\IfFileExists{upquote.sty}{\usepackage{upquote}}{}
\IfFileExists{microtype.sty}{
  \usepackage[]{microtype}
  \UseMicrotypeSet[protrusion]{basicmath} 
}{}
\makeatletter
\@ifundefined{KOMAClassName}{
  \IfFileExists{parskip.sty}{%
    \usepackage{parskip}
  }{
    \setlength{\parindent}{0pt}
    \setlength{\parskip}{6pt plus 2pt minus 1pt}}
}{
  \KOMAoptions{parskip=half}}
\makeatother
\usepackage{xcolor}
\setlength{\emergencystretch}{3em} 
\setcounter{secnumdepth}{5}
\makeatletter
\ifx\paragraph\undefined\else
  \let\oldparagraph\paragraph
  \renewcommand{\paragraph}{
    \@ifstar
      \xxxParagraphStar
      \xxxParagraphNoStar
  }
  \newcommand{\xxxParagraphStar}[1]{\oldparagraph*{#1}\mbox{}}
  \newcommand{\xxxParagraphNoStar}[1]{\oldparagraph{#1}\mbox{}}
\fi
\ifx\subparagraph\undefined\else
  \let\oldsubparagraph\subparagraph
  \renewcommand{\subparagraph}{
    \@ifstar
      \xxxSubParagraphStar
      \xxxSubParagraphNoStar
  }
  \newcommand{\xxxSubParagraphStar}[1]{\oldsubparagraph*{#1}\mbox{}}
  \newcommand{\xxxSubParagraphNoStar}[1]{\oldsubparagraph{#1}\mbox{}}
\fi
\makeatother

\usepackage{longtable,booktabs,array}
\usepackage{calc} 
\usepackage{etoolbox}
\makeatletter
\patchcmd\longtable{\par}{\if@noskipsec\mbox{}\fi\par}{}{}
\makeatother
\IfFileExists{footnotehyper.sty}{\usepackage{footnotehyper}}{\usepackage{footnote}}
\makesavenoteenv{longtable}
\usepackage{graphicx}
\makeatletter
\def\maxwidth{\ifdim\Gin@nat@width>\linewidth\linewidth\else\Gin@nat@width\fi}
\def\maxheight{\ifdim\Gin@nat@height>\textheight\textheight\else\Gin@nat@height\fi}
\makeatother
\setkeys{Gin}{width=\maxwidth,height=\maxheight,keepaspectratio}
\makeatletter
\def\fps@figure{htbp}
\makeatother

\addtolength{\oddsidemargin}{-.5in}%
\addtolength{\evensidemargin}{-.1in}%
\addtolength{\textwidth}{1in}%
\addtolength{\textheight}{1.7in}%
\addtolength{\topmargin}{-1in}
\makeatletter
\@ifpackageloaded{caption}{}{\usepackage{caption}}
\AtBeginDocument{%
\ifdefined\contentsname
  \renewcommand*\contentsname{Table of contents}
\else
  \newcommand\contentsname{Table of contents}
\fi
\ifdefined\listfigurename
  \renewcommand*\listfigurename{List of Figures}
\else
  \newcommand\listfigurename{List of Figures}
\fi
\ifdefined\listtablename
  \renewcommand*\listtablename{List of Tables}
\else
  \newcommand\listtablename{List of Tables}
\fi
\ifdefined\figurename
  \renewcommand*\figurename{Figure}
\else
  \newcommand\figurename{Figure}
\fi
\ifdefined\tablename
  \renewcommand*\tablename{Table}
\else
  \newcommand\tablename{Table}
\fi
}
\@ifpackageloaded{float}{}{\usepackage{float}}
\floatstyle{ruled}
\@ifundefined{c@chapter}{\newfloat{codelisting}{h}{lop}}{\newfloat{codelisting}{h}{lop}[chapter]}
\floatname{codelisting}{Listing}

\makeatother
\makeatletter
\makeatother
\makeatletter
\@ifpackageloaded{caption}{}{\usepackage{caption}}
\@ifpackageloaded{subcaption}{}{\usepackage{subcaption}}
\makeatother

\ifLuaTeX
  \usepackage{selnolig}  
\fi
\usepackage[]{natbib}
\bibliographystyle{agsm}
\usepackage{bookmark}

\IfFileExists{xurl.sty}{\usepackage{xurl}}{} 
\urlstyle{same} 
\hypersetup{
  pdftitle={Title},
  pdfauthor={Author 1; Author 2},
  pdfkeywords={3 to 6 keywords, that do not appear in the title},
  colorlinks=true,
  linkcolor={blue},
  filecolor={Maroon},
  citecolor={Blue},
  urlcolor={Blue},
  pdfcreator={LaTeX via pandoc}}

\newcommand{\anon}{1}


\usepackage{mathtools}
\mathtoolsset{showonlyrefs} 
\usepackage{booktabs}
\usepackage{multirow}
\usepackage{algorithm}
\usepackage{algpseudocode}
\usepackage{tabularx}
\usepackage{longtable}
\usepackage{amsthm}
\newtheorem{remark}{Remark}
\newtheorem{proposition}{Proposition}
\usepackage{xspace}

\usepackage{listings}
\usepackage{xcolor}

\definecolor{codegreen}{rgb}{0,0.6,0}
\definecolor{codegray}{rgb}{0.5,0.5,0.5}
\definecolor{codepurple}{rgb}{0.58,0,0.82}
\definecolor{backcolour}{rgb}{0.95,0.95,0.92}

\lstdefinestyle{mystyle}{
    backgroundcolor=\color{backcolour},   
    commentstyle=\color{codegreen},
    keywordstyle=\color{magenta},
    numberstyle=\tiny\color{codegray},
    stringstyle=\color{codepurple},
    basicstyle=\ttfamily\footnotesize,
    breakatwhitespace=false,         
    breaklines=true,                 
    captionpos=b,                    
    keepspaces=true,                 
    numbers=left,                    
    numbersep=5pt,                  
    showspaces=false,                
    showstringspaces=false,
    showtabs=false,                  
    tabsize=2,
    frame=single
}
\lstset{style=mystyle}


\newcommand{\fone}{\widehat{f}_1}
\newcommand{\ftwo}{\widehat{f}_2}

\newcommand{\boneopt}{h^*_1}
\newcommand{\btwoopt}{h^*_2}

\newcommand{\ahat}{\widehat{a}}
\newcommand{\bhat}{\widehat{b}}

\newcommand{\bw}{h}
\newcommand{\xbar}{\overline{x}}
\newcommand{\vbar}{\overline{v}}

\newcommand{\betareference}{h_{\mathrm{ref}}}
\newcommand{\hheur}{h_{\mathrm{heur}}}

\newcommand{\rot}{Beta Reference Rule\xspace}
\newcommand{\blscv}{Beta LSCV estimator\xspace}
\newcommand{\bise}{Beta ISE-optimal estimator\xspace}
\newcommand{\oracle}{Beta Oracle estimator\xspace}
\newcommand{\lsilv}{Logit-Silverman estimator\xspace}
\newcommand{\llscv}{Logit LSCV estimator\xspace}
\newcommand{\lise}{Logit ISE-optimal estimator\xspace}
\newcommand{\rsilv}{Reflection-Silverman estimator\xspace}
\newcommand{\rlscv}{Reflection LSCV estimator\xspace}
\newcommand{\rise}{Reflection ISE-optimal estimator\xspace}

\newcommand{\rott}{Beta (Ref)\xspace}
\newcommand{\blscvt}{Beta (LSCV)\xspace}
\newcommand{\biset}{Beta (ISE)\xspace}
\newcommand{\oraclet}{Beta (Oracle)\xspace}
\newcommand{\lsilvt}{Logit (Silverman)\xspace}
\newcommand{\llscvt}{Logit (LSCV)\xspace}
\newcommand{\liset}{Logit (ISE)\xspace}
\newcommand{\rsilvt}{Reflect (Silverman)\xspace}
\newcommand{\rlscvt}{Reflect (LSCV)\xspace}
\newcommand{\riset}{Reflect (ISE)\xspace}

\newcommand{\NT}{\mathcal{NT}}

\begin{document}

\def\spacingset#1{\renewcommand{\baselinestretch}%
{#1}\small\normalsize} \spacingset{1}


\if1\anon
{
  \title{\bf A Fast, Closed-Form Bandwidth Selector for the beta kernel Density Estimator}
  \author{Johan Hallberg Szabadv\'ary
    \hspace{.2cm}\\
    Department of Mathematics, Stockholm University \\
    and\\
    Department of Computing, Jönköping School of Engineering\\
    }
  \maketitle
} \fi

\if0\anon
{
  \bigskip
  \bigskip
  \bigskip
  \begin{center}
    {\LARGE\bf A Fast, Closed-Form Bandwidth Selector for the beta kernel Density Estimator}
\end{center}
  \medskip
} \fi

\bigskip
\begin{abstract}
     The beta kernel estimator offers a theoretically superior alternative to the Gaussian kernel for unit interval data, eliminating boundary bias without requiring reflection or transformation. However, its adoption remains limited by the lack of a reliable bandwidth selector, and practitioners currently rely on computationally expensive iterative optimization methods that are prone to instability. We derive the ``\rot'', a fast, closed-form bandwidth selector, based on the unweighted asymptotic mean integrated squared error (AMISE) of a beta reference distribution. To address boundary integrability issues, we introduce a principled heuristic for U-shaped and J-shaped distributions. By employing a method-of-moments approximation, we reduce the bandwidth selection complexity from iterative optimization to $\mathcal{O}(1)$. Extensive Monte Carlo simulations demonstrate that our rule matches the accuracy of numerical optimization while delivering a speedup of over 35,000 times. Real-world validation on socioeconomic data shows that it avoids the ``vanishing boundary'' and ``shoulder'' artifacts common to Gaussian-based methods. We provide a comprehensive, open-source Python package to facilitate the immediate adoption of the beta kernel as a drop-in replacement for standard density estimation tools.
\end{abstract}

\noindent%
{\it Keywords:} beta kernel, Bandwidth selection, Bounded data, Boundary correction, Nonparametric statistics
\vfill

\noindent
{\it To appear in Journal of Computational and Graphical Statistics}
\vfill
\newpage

\section{Introduction}
    Kernel density estimation \citep{rosenblatt1965remarks,parzen1962estimation} is a nonparametric statistical method used to estimate the probability density function of a random variable based on a finite data sample.
    For a univariate random sample $X_1, \dots, X_n$ drawn from an unknown density $f$ with support in the unit interval $[0,1]$, the standard Gaussian kernel density estimator is theoretically ill-suited for the following reasons. Because the Gaussian kernel assumes an unbounded support $(-\infty, \infty)$, it suffers from a severe ``boundary bias'' near endpoints. In these regions, the probability mass ``leaks'' outside the valid domain, and the bias of the estimator degrades to order $\mathcal{O}(h)$, rather than the standard $\mathcal{O}(h^2)$ convergence rate achieved in the interior \citep{wand1994kernel}.

    Practitioners often attempt to mitigate this bias using ad hoc corrections; however, these introduce significant theoretical artifacts. The reflection method (see, e.g., \citet{karunamuni2005generalized}), which mirrors data across the boundaries to correct the probability mass, enforces an artificial symmetry constraint. As noted by \citet{schuster1985incorporating} and \citet{cowling1996pseudodata}, this technique forces the derivative of the estimated density to vanish at the boundaries ($\hat{f}'(0)=\hat{f}'(1)=0$). Consequently, for distributions with nonzero boundary slopes, such as exponential or power-law distributions, reflection introduces a systematic ``shoulder'' artifact that misrepresents the true shape of the data.

    Other boundary correction techniques, such as the linear boundary kernel proposed by \citet{jones1993simple}, successfully reduce bias but often produce density estimates that are negative near the boundaries, thus violating the fundamental properties of the probability density function.

    Alternatively, transformation methods (for example, logit or probit) map the unit interval to the real line, apply a standard Gaussian KDE, and map the result back. Although this ensures correct support, it has two critical flaws. First, the transformation is undefined for data points exactly at the boundaries ($0$ or $1$), necessitating arbitrary data adjustment. Second, the interplay between the transformation Jacobian and light tails of the Gaussian kernel typically forces the estimated density to vanish at the boundaries ($\hat{f}(0)=0$) \citep{geenens2014probit}. This makes transformation methods particularly ill-suited for estimating distributions that are nonzero or unbounded at the endpoints, such as uniform or U-shaped beta distributions.

    A more theoretically sound approach is to use a kernel function whose support naturally matches that of the data. \citet{chen1999beta} proposed the beta kernel estimator, which replaces the Gaussian kernel functions with Beta densities. Unlike reflection, the beta kernel does not impose an artificial derivative constraint. Unlike transformations, it operates directly in the native data space. It is strictly non-negative, free from boundary bias (achieving the optimal $\mathcal{O}(h^2)$ bias everywhere), and possesses natural adaptivity; the variance of the kernel decreases as the estimation point moves toward the boundaries, automatically reducing smoothing, where the data are naturally denser. Theoretically, the beta kernel is the superior estimator for unit-interval data. This approach inspired a broader class of asymmetric kernel estimators, including gamma kernels for semi-infinite support \citep{chen2000probability} and inverse Gaussian kernels \citep{scaillet2004density}, all of which share the property of matching the kernel support to the data domain.

    However, despite its competitive performance and attractive properties, the beta kernel has not gained traction among practitioners. The primary obstacle is the lack of a simple, closed-form bandwidth-selection rule. The popularity of the Gaussian kernel is due, in no small part, to the availability of reliable plug-in bandwidth selectors, such as Silverman’s rule \citep{silverman2018density} or the Sheather–Jones solve-the-equation method \citep{sheather1991reliable}, which provides an immediate, data-driven bandwidth. In contrast, users of the beta kernel are currently forced to rely on numerical optimization methods, such as least squares cross-validation (LSCV) \citep{rudemo1982empirical}. LSCV is not only computationally expensive and scales poorly with the sample size but is also notoriously unstable, often producing highly variable bandwidths that result in undersmoothed estimates \citep{hall1987kullback}. \citet{geenens2014probit} identified this lack of a simple bandwidth selector as a critical gap that effectively disqualifies the beta kernel from routine use.

    Although \citet{hirukawa2010nonparametric} derived analytical bandwidths for beta-based estimators, their approach focuses on multiplicative bias correction and relies on minimizing a weighted MISE (AWMISE) to ensure convergence. This results in computationally intensive expressions involving polygamma functions that do not yield rapid and transparent rules of thumb.

    In addition to standard global selectors, nonparametric statistics offer a rich array of advanced bandwidth selection strategies. For instance, spatially adaptive methods—such as Lepski’s method \citep{lepski1997optimal} and the data-driven variable bandwidth approach of \citet{fan1995data-driven}, allow the bandwidth to vary locally to balance bias and variance across different density regions. Furthermore, recent bootstrap-based procedures \citep{Liu02082024} provide highly effective, resampling-driven AMISE approximations. However, applying these advanced classes of selectors to the beta kernel presents significant conceptual and technical obstacles. First, the beta kernel is inherently spatially adaptive; its shape and effective variance naturally adjust as the evaluation point approaches the boundaries. Imposing a locally varying bandwidth on an already locally varying asymmetric kernel introduces severe analytical complexity. Second, the primary focus of this study is computational efficiency. Spatially adaptive and bootstrap-based selectors rely on exhaustive local grid searches or intensive resampling, making them computationally far more expensive than even global LSCV. Therefore, while these advanced methods offer high theoretical precision, there remains a critical need for a fast, closed-form, global $\mathcal{O}(1)$ rule of thumb for bounded, asymmetric kernels.

    In this study, we address this computational bottleneck by deriving a fast closed-form rule of thumb for the beta kernel bandwidth. Analogous to Silverman's rule (also known as the Gaussian reference rule), we derive the optimal bandwidth by minimizing the Asymptotic Mean Integrated Squared Error (AMISE) of a beta reference distribution. Our derivation yields a simple analytical formula based on the method of moments estimates of the data parameters. Furthermore, we identify the domain of applicability for this approximation and propose a principled heuristic fallback for ``hard'' (U-shaped or J-shaped) distributions, where the asymptotic approximation breaks down.

    Our contribution allows the beta kernel to be used with the same computational ease as the Gaussian kernel, reducing the cost from iterative optimization to $O(1)$ while retaining its superior boundary properties. We provide a fully documented, open-source Python package that implements the estimator and bandwidth selection rules, thereby making the beta kernel a drop-in replacement for the Gaussian KDE in modern data science workflows.

\section{The beta kernel}\label{sec:betaKernel}
    \citet{chen1999beta} proposed an interesting kernel that uses beta densities. It is free from boundary bias and achieves an optimal rate of convergence for the mean integrated squared error. An important feature is that the support of the kernel functions matches the data support. It is also interesting that different amounts of smoothing are allocated by naturally varying the kernel shape without explicitly changing the value of the smoothing bandwidth. For a sample $x_1,\dots,x_n$ from an unknown distribution $f$ in the unit interval, \citet{chen1999beta} proposed two  estimators,
    \begin{equation}
        \fone(x) = \frac{1}{n}\sum_{i=1}^nK_{x/\bw+1, (1-x)/\bw+1}(x_i),
    \end{equation}
    where $K_{a,b}$ is the density function of a $\text{beta}(a,b)$ distribution, and
    \begin{equation}
        \ftwo(x) = \frac{1}{n}\sum_{i=1}^nK^*_{x, \bw}(x_i),
    \end{equation}
    where $K^*_{x,\bw}$ are boundary beta kernels defined as
    \begin{equation}
        K^*_{x, \bw}(t) = 
        \begin{cases}
            K_{x/\bw, (1-x)/\bw}(t) & \text{if $x\in[2\bw, 1-2\bw],$}\\
            K_{\rho(x,\bw), (1-x)/\bw}(t) & \text{if $ x\in[0,2\bw)$}\\
            K_{x/\bw, \rho(1-x,\bw)}(t) & \text{if $ x\in(1-2\bw,1]$}.
        \end{cases}
    \end{equation}
    Here, $\rho(x, \bw) = 2\bw^2 + 2.5 - \sqrt{4\bw^4 + 6\bw^2 + 2.25 - x^2 - x/\bw}$.

    \citet{chen1999beta} derived the optimal bandwidths, that minimise the \emph{mean integrated squared error} (MISE), for $\fone$ and $\ftwo$. For an unknown density $f$, with support $[0,1]$, they are
    \begin{equation}
        \boneopt = \frac{\bigg(\frac{1}{2\sqrt{\pi}}\int_0^1\frac{f(x)}{\sqrt{x(1-x)}}dx\bigg)^{2/5}}{4^{2/5}\bigg(\int_0^1((1-2x)f'(x) + \frac{1}{2}x(1-x)f''(x))^2dx\bigg)^{2/5}}n^{-2/5}
    \end{equation}
    and
    \begin{equation}\label{eq:MISEbandwidth}
        \btwoopt = \frac{\bigg(\frac{1}{2\sqrt{\pi}}\int_0^1\frac{f(x)}{\sqrt{x(1-x)}}dx\bigg)^{2/5}}{\bigg(\int_0^1(x(1-x)f''(x))^2dx\bigg)^{2/5}}n^{-2/5}
    \end{equation}
    respectively. \citeauthor{chen1999beta} concluded that $\ftwo$ achieves a lower optimal MISE, and a smaller optimal bandwidth, and is therefore recommended over $\fone$. From this point onward, all references to the beta kernel refer to estimator $\ftwo$.

    \begin{remark}[Mass Conservation]
        It is a known property of asymmetric kernel estimators, including the beta kernel $\ftwo$, that they do not strictly preserve unit probability mass in finite samples \citep{jones1990variable,chen1999beta}. Although the estimator represents a valid probability density with respect to the data point $t$, it is not necessarily normalized with respect to the evaluation point $x$.  In Appendix \ref{app:AME}, we derive the exact asymptotic deviation, which shows that the total probability mass converges to unity at a rate of $O(h)$.
        Although post-hoc renormalization is possible \citep {jones1996aSimple}, we evaluate the estimator in its canonical, unnormalized form. This choice allows us to isolate the performance gains strictly attributable to the boundary-adaptive shape of the beta kernel. Furthermore, this provides a conservative assessment of performance, as any deviation from the unit mass strictly penalizes the estimator under the MISE metric used in the experiments.
    \end{remark}

    An interesting feature of the beta kernel is that its shape varies naturally (because $x$ determines the parameters of each individual beta density), which implies that the amount of smoothing varies according to the position where the density is estimated without explicitly changing the bandwidth of the kernel. Therefore, the beta kernel estimator is an adaptive density estimator, as illustrated in Figure \ref{fig:kernel_shape_plot}.
    \begin{figure}[h!]
        \centering
        \includegraphics[width=\textwidth]{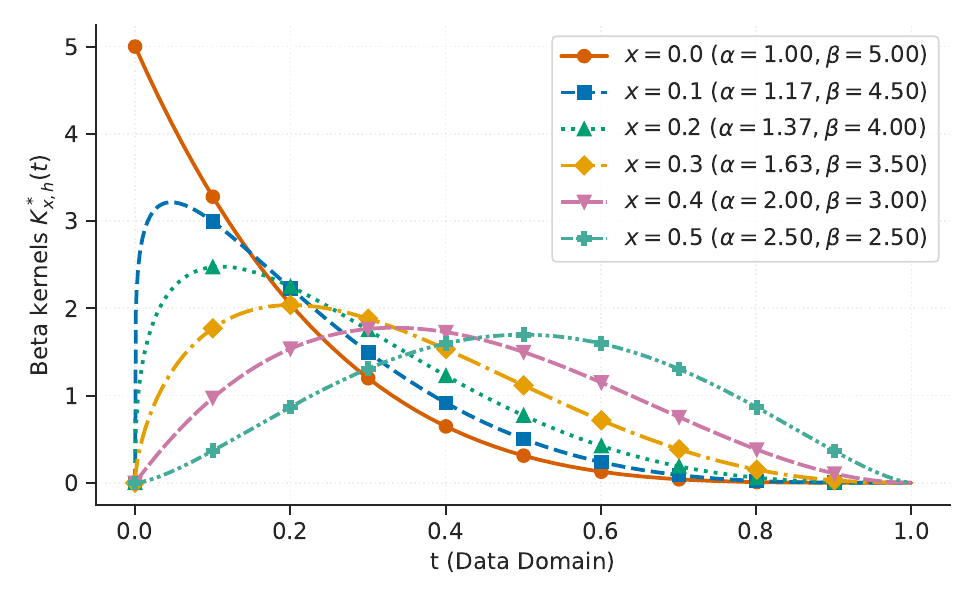}
        \caption{beta kernels $K^*_{x,\bw}(t)$ for bandwidth $h=0.2$.}
        \label{fig:kernel_shape_plot}
    \end{figure}

    \citet{chen1999beta} compared the beta kernel estimator favorably with the local linear estimator \citep{jones1993simple} and the non-negative estimator \citep{jones1996aSimple}, and concluded that the estimator $\ftwo$ was a serious competitor with existing density estimator. 
    \citet{bouezmarni2003consistency} built upon \citeauthor{chen1999beta}'s work by providing a rigorous analysis of the beta kernel estimator, establishing the exact asymptotic behavior of its expected $L_1$ error and proving its uniform weak consistency for continuous densities on a compact support, thereby solidifying its suitability for bounded data by demonstrating its favorable theoretical properties.
   
    It is important to clarify that although the beta kernel is locally adaptive (as its shape changes with $x$), its overall performance is still governed by a single global bandwidth parameter $h$. The positive results of \citet{chen1999beta} hinge on the appropriate choice of $h$. Despite its competitive performance and attractive properties, the beta kernel has not gained traction among practitioners because of its complexity. This is largely because, unlike the ubiquitous Gaussian kernel, it lacks a simple and well-known bandwidth selection rule.
        
    The popularity of the Gaussian kernel is due, in no small part, to \emph{Silverman's rule of thumb} \citep{silverman2018density}, which provides an easy data-driven starting point. An analogous rule of thumb is required to make the beta kernel estimator accessible and to unlock its practical potential. The following section derives a practical bandwidth selection rule.    

\section{A rule of thumb bandwidth estimator}
    The main issue for practitioners seeking to use the beta kernel $\ftwo$ is that the optimal bandwidth \eqref{eq:MISEbandwidth} depends on the true unknown distribution; even if it is known, \eqref{eq:MISEbandwidth} is a complicated expression involving nontrivial integrals that may be difficult to solve. Therefore, we derive a simple ``rule of thumb,” inspired by the well-known Silverman's rule of thumb for the Gaussian kernel. This idea is very simple; instead of considering all possible densities on $[0,1]$, we choose a representative parametric family for which the optimal bandwidth can be computed. For a sample from an unknown distribution, we act as if the density belongs to our representative family, estimate the parameters, and use the bandwidth computed using Eq. \eqref{eq:MISEbandwidth}. If the true distribution is roughly well approximated by our family, the resulting rule-of-thumb bandwidth can be expected to be nearly optimal.

    First, we select a reference parametric family for the unit interval. The beta distribution family is a natural choice for this purpose. It has the correct support and is sufficiently flexible for modeling several data shapes. It is also a standard choice in Bayesian statistics (e.g., as a conjugate prior for the binomial distribution). The density function of a beta random variable with parameters $a, b > 0$ is
    \begin{equation}
        \label{eq:betaDensity}
        f(x) = \frac{x^{a-1}(1-x)^{b-1}}{B(a, b)},
    \end{equation}
    where 
    \begin{equation}
        B(a,b) = \frac{\Gamma(a)\Gamma(b)}{\Gamma(a+b)} = \int_0^1t^{a-1}(1-t)^{b-1}dt
    \end{equation}
    is the beta function, which acts as a normalizing constant, and $\Gamma$ is the gamma function. Importantly, for real numbers $a,b$, the beta function is undefined for $a,b\leq0$.

    The MISE optimal bandwidth \eqref{eq:MISEbandwidth}, assuming that $f$ is a beta density function, can be written as
    \begin{equation}
        \label{eq:MISEbandwidthBeta}
        \btwoopt = \bigg(\frac{1}{2n\sqrt{\pi}}\frac{I_1}{I_2}\bigg)^{2/5},
    \end{equation}
    where
    \begin{equation}
        \label{eq:I1}
        I_1 := \int_0^1\frac{f(x)}{\sqrt{x(1-x)}}dx
    \end{equation}
    and
    \begin{equation}
        \label{eq:I2}
        I_2 := \int_0^1(x(1-x)f''(x))^2dx.
    \end{equation}

    We can now compute
    \begin{equation}\label{eq:I1Solved}
        \begin{aligned}
            I_1 &= \int_0^1\frac{x^{a-1}(1-x)^{b-1}}{B(a,b)\sqrt{x(1-x)}}dx
            = \frac{1}{B(a, b)}\int_0^1\frac{x^{a-1}}{x^{1/2}}\cdot \frac{(1-x)^{b-1}}{(1-x)^{1/2}}dx \\
            &= \frac{1}{B(a,b)}\int_0^1 x^{(a-\frac{1}{2})-1} (1-x)^{(b-\frac{1}{2})-1}dx 
            = \frac{B(a-\frac{1}{2}, b-\frac{1}{2})}{B(a,b)} \\
            &= \frac{\Gamma(a+b)\Gamma(a-\frac{1}{2})\Gamma(b-\frac{1}{2})}{\Gamma(a)\Gamma(b)\Gamma(a+b-1)}
            = \frac{(a+b-1)\Gamma(a-\frac{1}{2})\Gamma(b-\frac{1}{2})}{\Gamma(a)\Gamma(b)},
        \end{aligned}
    \end{equation}
    where the last equality follows from the identity $\Gamma(z+1) = z\Gamma(z)$.
    This formula is valid when $a,b>1/2$. Otherwise, the beta function is undefined. 
    
    The second integral in \eqref{eq:I2} is more complex. This requires substituting the second derivative of the beta density function, which results in a complex integral of a polynomial multiplied by the beta densities. The full derivation is presented in Appendix \ref{app:computingI2}. The complex algebraic manipulations can be handled using a computer algebra system, which yields the final form
    \begin{equation}\label{Eq.{I2Solved}}
        I_2 = \frac{(a-1) (b-1) (a (3 b-4)-4 b+6) \Gamma (2 a-3) \Gamma (2 b-3) \Gamma (a+b)^2}{(2 a+2 b-5) (2 a+2 b-3) \Gamma (a)^2 \Gamma (b)^2 \Gamma (2 a+2 b-6)},
    \end{equation}
    which is valid when $a,b>3/2$.
    
    We can now compute the MISE optimal bandwidth \eqref{eq:MISEbandwidthBeta} by substituting the integral values $I_1$ and $I_2$. The MISE optimal bandwidth for a beta distribution with parameters $a,b>3/2$ is given by
    \begin{equation}\label{eq:MISEoptBeta}
        \btwoopt = \bigg(\tfrac{1}{2n\sqrt{\pi}}\tfrac{(a+b-1) (2 a+2 b-5) (2 a+2 b-3) \Gamma \left(a-\frac{1}{2}\right) \Gamma (a) \Gamma \left(b-\frac{1}{2}\right) \Gamma (b) \Gamma (2 a+2 b-6)}{(a-1) (b-1) (a (3 b-4)-4b+6) \Gamma (2 a-3) \Gamma (2 b-3) \Gamma (a+b)^2}\bigg)^{2/5}.
    \end{equation}
    The expression \eqref{eq:MISEoptBeta} can (with patience or using a computer algebra system) be simplified to
    \begin{equation}
        \btwoopt = \left(\frac{\sqrt{\pi}}{n}\tfrac{(2 a-3) (2 b-3) 2^{-2 a-2 b+5} (2 a+2 b-5) (2 a+2 b-3) \Gamma (2 (a+b-3))}{a (3 b-4)-4 b+6) \Gamma (a+b-1) \Gamma (a+b)}\right)^{2/5},
    \end{equation}
    This reduces the number of evaluations of the gamma function from eight to three, which can be computed more rapidly. 
    A rule of thumb bandwidth estimator which approximates the MISE optimal bandwidth if the data are well-approximated by a beta distribution with parameters $a,b>3/2$ is therefore given by
    \begin{equation}
        \label{eq:MISERule}
        \betareference = \left(\frac{\sqrt{\pi}}{n}\tfrac{(2 \ahat-3) (2 \bhat-3) 2^{-2 \ahat-2 \bhat+5} (2 \ahat+2 \bhat-5) (2 \ahat+2 \bhat-3) \Gamma (2 (\ahat+\bhat-3))}{\ahat (3 \bhat-4)-4 \bhat+6) \Gamma (\ahat+\bhat-1) \Gamma (\ahat+\bhat)}\right)^{2/5},
    \end{equation}
    where $(\ahat, \bhat)$ is estimated from the data, for example, using maximum likelihood estimation (MLE) or the method of moments (MoM). This differs from the ``plain'' rule of thumb used in comparable studies (e.g., \citet{hirukawa2010nonparametric}), which applies a generic Gaussian-style scaling $\hat{\sigma}n^{-2/5}$. Our derivation \eqref{eq:MISERule} incorporates the specific curvature properties of the Beta reference distribution into the constant factor, providing a tighter approximation
    
    Note that $\betareference$ approximates the integral only if $\ahat,\bhat>3/2$. In practice, to prevent floating-point overflow, the calculation can be performed in log space using, for example, the \texttt{gammaln} function provided by the \texttt{scipy} Python package \citep{scipy}. 

    \subsection{Parameter estimation and applicability}
        Our rule of thumb, \eqref{eq:MISERule}, requires an estimate of the parameters of the beta reference distribution. To align with the purpose of a rule of thumb (a fast, ``good enough'' solution in many cases), we recommend the method of moments, as it is easy to compute from the sample mean and variance. Let 
        \begin{equation}
            \overline{x} := \frac{1}{n}\sum_{i=1}^nx_i
        \end{equation}
        be the sample mean, and
        \begin{equation}
            \vbar := \frac{1}{n-1}\sum_{i=1}^n(x_i - \overline{x})^2
        \end{equation}
        be the sample variance. If $\vbar<\xbar(1-\overline{x})$, the MoM estimate of the parameters $a,b$ are
        \begin{equation}
            \begin{aligned}
                \ahat &:= \xbar\bigg(\frac{\xbar(1-\xbar)}{\vbar}-1\bigg) \\
                \bhat &:= (1-\xbar)\bigg(\frac{\xbar(1-\xbar)}{\vbar}-1\bigg).
            \end{aligned}
        \end{equation}
        These estimates are fast and simple to compute and can be plugged into our rule of thumb \eqref{eq:MISERule} to quickly compute $\betareference$.

        Note that the denominator integral $I_2$, and thus our final rule of thumb $\betareference$, is only defined for beta distributions, where $a, b > 3/2$. This constraint excludes U-shaped ($a, b < 1$) and J-shaped distributions. If the parameters $(\ahat, \bhat)$ estimated from the data do not satisfy this constraint, the rule of thumb cannot be applied directly. 

        We have identified the firm constraints that define the domain of applicability of our rule.
        \begin{enumerate}
            \item 
            \emph{MoM constraint:} $\vbar<\xbar(1-\overline{x})$
            \item 
            \emph{Integral constraint:} $\ahat, \bhat > 3/2$. 
        \end{enumerate}
        If these conditions are not met, the rule is not applicable, and a more general bandwidth selector such as cross-validation should be used instead.
        
        This is not a unique flaw in the proposed rule. All plug-in rule-of-thumb methods, including the classic Silverman rule \citep{silverman2018density}, are based on a reference distribution (e.g., Gaussian). Similarly, when the true data-generating process is multimodal, Silverman's rule fails to provide a useful bandwidth. Our constraints clarify these failures.

        It is important to emphasize that the integral constraint ($\ahat, \bhat > 3/2$) is not unique to our approximation but is inherent to the MISE framework itself. For distributions that violate this condition (e.g., U-shaped densities), the roughness functional \eqref{eq:I2} diverges from the original distribution. Consequently, a finite MISE-optimal bandwidth does not strictly exist in the standard sense, rendering heuristic approaches not only convenient but also theoretically necessary.

        Previous attempts to derive analytical bandwidths have avoided this divergence by introducing weights \citep{hirukawa2010nonparametric}. Although mathematically convenient for establishing asymptotic properties, such weighting effectively ignores the fit at the endpoints, which is the region where the beta kernel offers the most value. We evaluate the unweighted MISE directly and propose a specific heuristic for divergent cases.
        
    \subsection{The fallback rule}
        We identified the domain of applicability of $\betareference$. However, in practice, it may be desirable to extend this domain by employing a heuristic rule of thumb when $(\ahat,\bhat)$ falls outside its domain. \citeauthor{chen1999beta}'s analysis showed that that the MISE optimal bandwidth is $\mathcal{O}(n^{-2/5})$. 

        To ensure theoretical transparency, it is necessary to distinguish the two modes of divergence that occur for small shape parameters. First, the unweighted AMISE framework relies on the square integrability of the density’s second derivative, $\int (f''(x))^2 dx$, which diverges when $a, b \leq 1.5$. In the regime of $0.5 < a, b \leq 1.5$, the exact finite-sample MISE remains finite, but the asymptotic Taylor approximation used to derive the bandwidth fails. Second, for extreme boundary accumulation, where $a, b \leq 0.5$ (such as strict U-shaped distributions), the true density itself ceases to be square-integrable, causing the exact finite-sample MISE to diverge. Therefore, the proposed fallback heuristic is essential to provide a stable, well-defined bandwidth across both the regime where the asymptotic approximation breaks down and the regime where the $L_2$ error metric becomes theoretically unbounded.

        We propose a principled closed-form heuristic that uses the estimated parameters $\ahat,\bhat$. Our solution is to define a heuristic scaling factor $C(\ahat,\bhat)$ so that the final bandwidth is 
        $$
            \hheur = C(\ahat,\bhat)n^{-2/5}.
        $$
        By isolating the data-driven scaling factor $C(\hat{a}, \hat{b})$, this formulation guarantees that the fallback bandwidth strictly preserves the $\mathcal{O}(n^{-2/5})$ asymptotic decay rate, which is necessary for balancing variance and boundary bias in the beta kernel estimator, as proven by \citet{chen1999beta}.
        The scaling factor is calculated from the properties of the best-fitting beta distribution, which is defined by $(\ahat,\bhat)$. 
        The heuristic scaling factor is
        \begin{equation}\label{eq:heuristicScaling}
            C(\ahat,\bhat):= \frac{\sqrt{\text{Var}(\ahat,\bhat)}}{1 + |\text{Skewness}(\ahat,\bhat)| + |\text{Excess Kurtosis}(\ahat,\bhat)|}.
        \end{equation}
        
        To calculate this, we must compute three standard values.
        \begin{enumerate}
            \item 
            \textbf{Variance:} Provides the fundamental scale for the scaling factor. This ensures that data with a wider, more dispersed shape (higher variance) receive a proportionally larger $C(\ahat,\bhat)$ and, thus, a larger final bandwidth $h$.
            $$\text{Var}(\ahat,\bhat) = \frac{\ahat\bhat}{(\ahat+\bhat)^2(\ahat+\bhat+1)}$$
            \item 
            \textbf{Skewness:} Serves as a penalty for asymmetry. Highly skewed, ``J-shaped'' distributions require a smaller bandwidth (less smoothing), and the large $|\text{Skewness}|$ term in the denominator correctly shrinks the scaling factor $C(\hat{a},\hat{b})$ to provide this.
            $$\text{Skewness}(\ahat,\bhat) = \frac{2(\bhat-\ahat)\sqrt{\ahat+\bhat+1}}{(\ahat+\bhat+2)\sqrt{\ahat\bhat}}$$
            \item 
        \textbf{Excess Kurtosis:} Serves as a data-driven ``complexity penalty'' that adaptively shrinks the bandwidth for ``spiky'' J-shaped or U-shaped distributions, which require less smoothing.
        $$\text{Excess Kurtosis}(\ahat,\bhat) = \frac{6((\ahat+\bhat)^2(\ahat+\bhat+1) - \ahat\bhat(\ahat+\bhat+2))}{\ahat\bhat(\ahat+\bhat+2)(\ahat+\bhat+3)}$$
        \end{enumerate}

        Importantly, we do not claim that this heuristic is optimal. In the absence of a convergent analytical solution for these divergent cases, the functional form of $C(\ahat, \bhat)$ was constructed to prioritize parsimony and robustness over precision. The numerator, $\sqrt{Var(\hat{a},\hat{b})}$, establishes the fundamental scale, ensuring that the bandwidth remains proportional to the data dispersion. The denominator serves as a robust regularization term. High skewness and excess kurtosis typically indicate a probability mass that concentrates sharply against boundaries (as in J-shaped distributions) or significant deviations from the assumption of a unimodal beta distribution (as in bimodal or U-shaped mixtures). In these ``hard'' regimes, the standard asymptotic approximation fails. Therefore, we employ an unweighted sum of the absolute skewness and excess kurtosis to dampen the bandwidth. This choice is deliberate: by avoiding fitted coefficients, we prevent overfitting to specific test distributions while ensuring that the bandwidth is adaptively reduced as the shape complexity increases. 

        To motivate the specific form, an ablation study is presented in Appendix \ref{app:ablation}. The results demonstrate that omitting any of these higher-order moments renders the estimator vulnerable to catastrophic structural failure.

        \begin{algorithm}[h!]
        \caption{The Rule of Thumb Bandwidth Selection Algorithm}\label{alg:adaptiveRuleOfThumb}
        \begin{algorithmic}
        \Require Data $X = \{x_1, \dots, x_n\}$
        \State $n \gets |X|$
        \State Estimate parameters $(\hat{a},\hat{b})$ from $X$ using the Method of Moments.
        \If{$\hat{a} > 3/2$ \textbf{and} $\hat{b} > 3/2$}
            \Comment{Domain is valid: use the main plug-in rule}
            \State $h \gets \betareference$ defined in Eq. \eqref{eq:MISERule}.
        \Else
            \Comment{Domain is invalid: use the fallback heuristic}
            \State $h \gets \hheur = C(\hat{a},\hat{b}) n^{-2/5}$, where $C(\hat{a},\hat{b})$ is defined in Eq. \eqref{eq:heuristicScaling}.
        \EndIf
        \State \Return $h$
        \end{algorithmic}
        \end{algorithm}

\section{Empirical evaluation: Experimental setup}\label{sec:experimental-setup}
    We designed two comprehensive experiments to assess the performance of the proposed MISE rule-of-thumb bandwidth selector.
    \begin{enumerate}
        \item 
        A large-scale Monte Carlo simulation (Experiment 1) was conducted to compare the performance against a known ground truth ( MISE-optimal bandwidth) across various distributions and sample sizes.
        \item 
        A real-world application (Experiment 2) was used to evaluate the practical performance, scalability, and predictive power of complex and messy datasets using a rigorous cross-validation framework.
    \end{enumerate}
    All experiments were conducted on a Linux server equipped with an Intel Xeon Gold 6526Y CPU and 512 GB of RAM. 
    The computation times report the average wall clock time per fit, measured using a sequential execution model.
    \subsection{Competing methods}\label{sec:compeatingMethods}
        We evaluated a total of 10 methods in our experiments, which are described below.
        \begin{itemize}
            \item 
            \textbf{Proposed method (\rot):} Our fast, analytic bandwidth rule for the beta kernel, based on the Asymptotic MISE (AMISE) formula, with a robust fallback heuristic.
            \item 
            \textbf{Primary ``Gold Standard'' Competitor (\blscv):} The full, numerical LSCV optimization for the beta kernel. This is a slow but established method that \rot was designed to replace.
            \item 
            \textbf{Alternative Kernel Methods:} We compare against two common Gaussian-kernel-based approaches for $[0, 1]$ data:
            \begin{itemize}
                \item 
                \textbf{Logit Transform:} \lsilv (Silverman's rule) and \llscv (LSCV-optimization on logit-transformed data. We selected logit transformation as the baseline. While \citet{geenens2014probit} argue for the theoretical advantages of the probit transform, the logit function remains the canonical mapping for unit-interval data in data science (e.g., as the inverse of the sigmoid activation in machine learning). It represents the standard ``transformation-based'' workflow for practitioners applying Gaussian KDE to bounded data.)
                \item 
                \textbf{Reflection:} \rsilv (Silverman's rule) and \rlscv (LSCV optimization on reflected data).
            \end{itemize}
            \item 
            \textbf{Theoretical Ground Truth (Simulation Only):}
            \begin{itemize}
                \item 
                \bise, \lise, \rise: The bandwidth is calculated by direct minimization of the (unknowable in practice) Integrated Squared Error (ISE). 
                \item 
                \oracle: The theoretical MISE-optimal bandwidth \eqref{eq:MISEbandwidth} is calculated using the true distribution.
            \end{itemize}
        \end{itemize}

    \subsection{Evaluation metrics}
        We evaluated the method based on the following four criteria.
        \begin{itemize}
            \item 
            \textbf{Computation time (s):} The wall-clock time it takes to fit the method on the given data. It measures practical feasibility and scalability.
            \item 
            \textbf{Integrated Squared Error (ISE):} The ground truth $L^2$ error.
            \begin{equation}
                \label{eq:ISEscore}
                ISE(h) := \int_0^1(\hat{f}_h(x) - f(x))^2dx
            \end{equation}
            This was our primary metric in Experiment 1, in which the true density $f$ was known. This could not be computed for the real-world data used in Experiment 2. Lower values indicate better performance.
            \item 
            \textbf{LSCV:} The universal LSCV score, calculated on the full dataset
            \begin{equation}
                \label{eq:LSCVscore}
                LSCV(h) := \int_0^1 \hat{f}_h(x)^2dx - \frac{2}{n}\sum_{i=1}^n\hat{f}_{h,(-i)}(x_i),
            \end{equation}
            where $\hat{f}_{h,(-i)}(x_i)$ is the leave-one-out (LOO) density estimate at $x_i$, fitted to the full dataset, except $x_i$. For computational efficiency in the large-scale Monte Carlo simulation (Experiment 1), this score was computed using 10-fold cross-validation to approximate the expensive LOO term. For real-world applications (Experiment 2), the theoretically exact LOO term was computed directly to ensure maximum accuracy. This is our primary performance metric for the ``hard'' distributions in Experiment 1 (where the ISE is not computable) and a key summary metric in Experiment 2. Lower is better.
            \item 
            \textbf{Per-Fold Statistical Metrics (For Experiment 2):} To achieve statistical power in our real-world analysis, we ran a separate 10-fold cross-validation procedure. To achieve sufficient statistical power and avoid reliance on a single data shuffle, we employed a 10-repetition, 10-fold cross-validation procedure. This generated 100 scores for each method, which were compared using the robust nonparametric Wilcoxon signed-rank test \citep{wilcoxon1945test}.
            \begin{itemize}
                \item
                \textbf{Mean Held-out Density:} We computed the mean held-out density for each of the 100 folds. This metric represents the data-driven cross-validation term ($\frac{1}{n}\sum \hat{f}_{h,(-i)}(x_i)$) of the LSCV objective. Because this term is subtracted in the LSCV formula, higher values indicate more accurate density estimates.
            \end{itemize}
        \end{itemize}

    \subsection{Experiment 1: Monte Carlo simulation}
        Objective: This study aimed to evaluate the performance of \rot in a controlled environment. We measured its speed, robustness, and (most importantly) its true accuracy against the known ``ground truth'' optimal bandwidth, as well as other competing methods. 

        We used eight distributions, which can be categorized as follows:
        \begin{itemize}
            \item 
            \textbf{``Nice'' (Bell-shaped but possibly skewed)}. These are $B(5, 5))$, $B(2, 12)$, $\NT(0.5, 0.15)$, $\NT(0.7, 0.15)$. Here, $B$ denotes the beta distribution that precisely satisfies the assumptions under which \rot was derived and $\NT$ denotes the truncated Gaussian (TG) distribution. All ``nice'' distributions are expected to work well with \rot.
            \item 
            \textbf{``Hard'' (U-shape, J-shape and boundary case):} These distributions are $B(0.5, 0.5)$, which is U-shaped, $B(0.8, 2.5)$, which is J-shaped. These fall outside the domain of applicability for \rot, which allows testing of the fallback rule. We also included the boundary case $B(1.5, 1.5)$, whose parameters lie exactly on the boundary of the domain of applicability in the analysis.
            \item 
            \textbf{``Tricky'' (Bimodal):} We included a mixture of $B(10, 30)$ and $B(30,10)$ with mixing parameter 1/2, which is a bimodal distribution. This is particularly challenging for any rule of thumb.
        \end{itemize}

        For each distribution and method listed in Section \ref{sec:compeatingMethods}, we fit the method to samples of sizes $n=50, 100, 250, 500, 1000, 2000$. However, for the ``hard'' distributions, the \oracle methods and the methods that directly minimize the ISE are not available because the integrals either diverge or are numerically unstable. We ran 1000 independent trials for each distribution, method, and sample size, resulting in 48000 repetitions. 

        For each method, distribution, and sample size, we recorded the computation time, LSCV score, and, with the exception of the ``hard'' distributions (where it cannot be computed), we also recorded the ISE score. For the \rot, we also recorded whether the heuristic fallback rule was applied. To assess the statistical significance of the performance differences across the 1,000 trials, we applied the nonparametric Wilcoxon signed-rank test.

    \subsection{Experiment 2: Real-world application}\label{subsec:real-world-experiment}
        The goal of this experiment was to transition from controlled simulations to real-world scenarios. We assessed the practical performance of the bandwidth selectors on complex, ``messy'' data, in which the true distribution was unknown. This experiment was designed to provide statistically powerful performance comparisons.

        We used the ``Communities and Crime'' dataset, which is publicly available through the UCI machine learning repository \citep{communities_and_crime_183}. These data are naturally bounded in $[0,1]$, making them suitable for our purpose. Specifically, we used three variables.
        \begin{itemize}
            \item 
            PctKids2Par (percentage of kids in two-parent households)
            \item 
            PctPopUnderPov (percentage of population under poverty)
            \item 
            PctVacantBoarded (percentage of vacant housing that is boarded up)
        \end{itemize}

        We compared the six aforementioned ``practical methods'' listed in Section \ref{sec:compeatingMethods}. The \oracle and methods that directly minimize the ISE were excluded because they assume knowledge of the true density.

        We performed 10 repetitions of 10-fold cross-validation to compare the methods. We then used the Wilcoxon signed-rank test to assess significance, as it is a robust nonparametric test well-suited for this comparison.

\section{Experimental results}
    We conducted two experiments, as described in Section \ref{sec:experimental-setup}. The results from both the large-scale simulation and real-world data application provide a comprehensive and consistent narrative: the proposed \rot (MISE rule-of-thumb) is not only a feasible and scalable alternative to \blscv (LSCV optimization), but also a more accurate and stable estimator of the MISE-optimal bandwidth.

    \subsection{Experiment 1: Monte Carlo simulation results}
        The simulation (1000 trials per configuration) was designed to test scalability, ground-truth accuracy (ISE), with the exception of the ``hard distributions, '' where this metric is unavailable, and robustness. 

        The LSCV scores are summarized in Table \ref{tab:main_table_lscv}, where we report the LSCV score of each method, averaged over the ``nice,” ``bimodal,” and ``hard'' distributions and all sample sizes. The results reveal a clear robustness–trade-off. 
        On ``nice'' distributions, the computationally expensive \blscv method achieves a mean score of -2.2716 (median: -1.9155), which is a statistically significant, albeit small, improvement over our \rot's mean of -2.2667 (median: -1.9134).

        However, this advantage is reversed when the data become complex. On the 'hard' distributions, the \rot was statistically significantly better ($p < 0.001$) than its slow \blscv counterpart and the \rlscv estimator. 
        On the 'bimodal' distribution, our rule significantly outperformed \blscv, although the \rlscv estimator achieved a better LSCV score (it is worth noting that the bimodal distribution satisfies the zero-derivative boundary condition that is enforced by the reflection method).
        This strongly suggests that the LSCV optimization process becomes unstable and fails to find a good bandwidth for these data types, whereas our rule remains robust. Notably, while \lsilv achieves the best mean score (-3.7532) on ``hard'' data, the median score (-1.1105) is worse than the median score of all non-logit methods. An analysis of the simulation data reveals that the mean score of the \lsilv is heavily skewed by massive negative outliers, likely caused by the logit function mapping data near the boundary to $\pm\infty$. When comparing the performance using the Wilcoxon signed-rank test, \rot outperformed \lsilv ($p<0.001$).

        \begin{table}[h!]
            \centering
            \resizebox{\textwidth}{!}{
\begin{tabular}{lccc}
\toprule
 & 'Nice' Distributions & 'Bimodal' Distribution & 'Hard' Distributions \\
Method &  &  &  \\
\midrule
\rott & -2.2667 (-1.9134) & -1.9795 (-2.0221) & \textbf{-1.5459 (-1.6411)} \\
\blscvt & -2.2716 (-1.9155)$^{***}$ & -1.9603 (-1.9878)$^{***}$ & -1.5200 (-1.5532)$^{***}$ \\
\biset & -2.2598 (-1.9120)$^{***}$ & -2.0211 (-2.0511)$^{***}$ & - \\
\oraclet & -2.2664 (-1.9139)$^{***}$ & -2.0262 (-2.0526)$^{***}$ & - \\
\lsilvt & -2.2557 (-1.9055)$^{***}$ & -1.8364 (-1.8831)$^{***}$ & -3.7532 (-1.1105)$^{***}$ \\
\llscvt & -2.2610 (-1.9087)$^{***}$ & -2.0287 (-2.0528)$^{***}$ & -2.9911 (-1.0972)$^{***}$ \\
\liset & -2.2482 (-1.9037)$^{***}$ & -2.0174 (-2.0491)$^{***}$ & - \\
\rsilvt & -2.2559 (-1.9102)$^{***}$ & -1.8669 (-1.9122)$^{***}$ & -1.4185 (-1.2950)$^{***}$ \\
\rlscvt & \textbf{-2.2700 (-1.9175)}$^{*}$ & \textbf{-2.0312 (-2.0538)}$^{***}$ & -1.5065 (-1.5345)$^{***}$ \\
\riset & -2.2529 (-1.9112)$^{***}$ & -2.0196 (-2.0502)$^{***}$ & - \\
\bottomrule
\end{tabular}

            }
            \caption{Mean LSCV scores (median in parentheses) across distribution groups. Bold indicates the best median per group. Significance of Wilcoxon signed-rank tests vs. the reference method: $^{*}p<0.05$, $^{**}p<0.01$, $^{***}p<0.001$.}
            \label{tab:main_table_lscv}
        \end{table}

        These findings are visually confirmed in Figure \ref{fig:LSCV_Score_vs_N}. This figure plots the mean LSCV score (lower is better) against the sample size ($n$) for all eight test distributions. Across all panels, the performance of the proposed rule, \rot (solid line), is highly competitive. On the ``nice'' and ``bimodal'' distributions, it closely tracks the performance of the best (but slow) LSCV-based methods. For the ``hard'' distributions, such as $B(0.5, 0.5)$ and $B(0.8, 2.5)$,  stability is evident, as it maintains consistently low scores.
        
        \begin{figure}[h!]
            \centering
            \includegraphics[width=\textwidth]{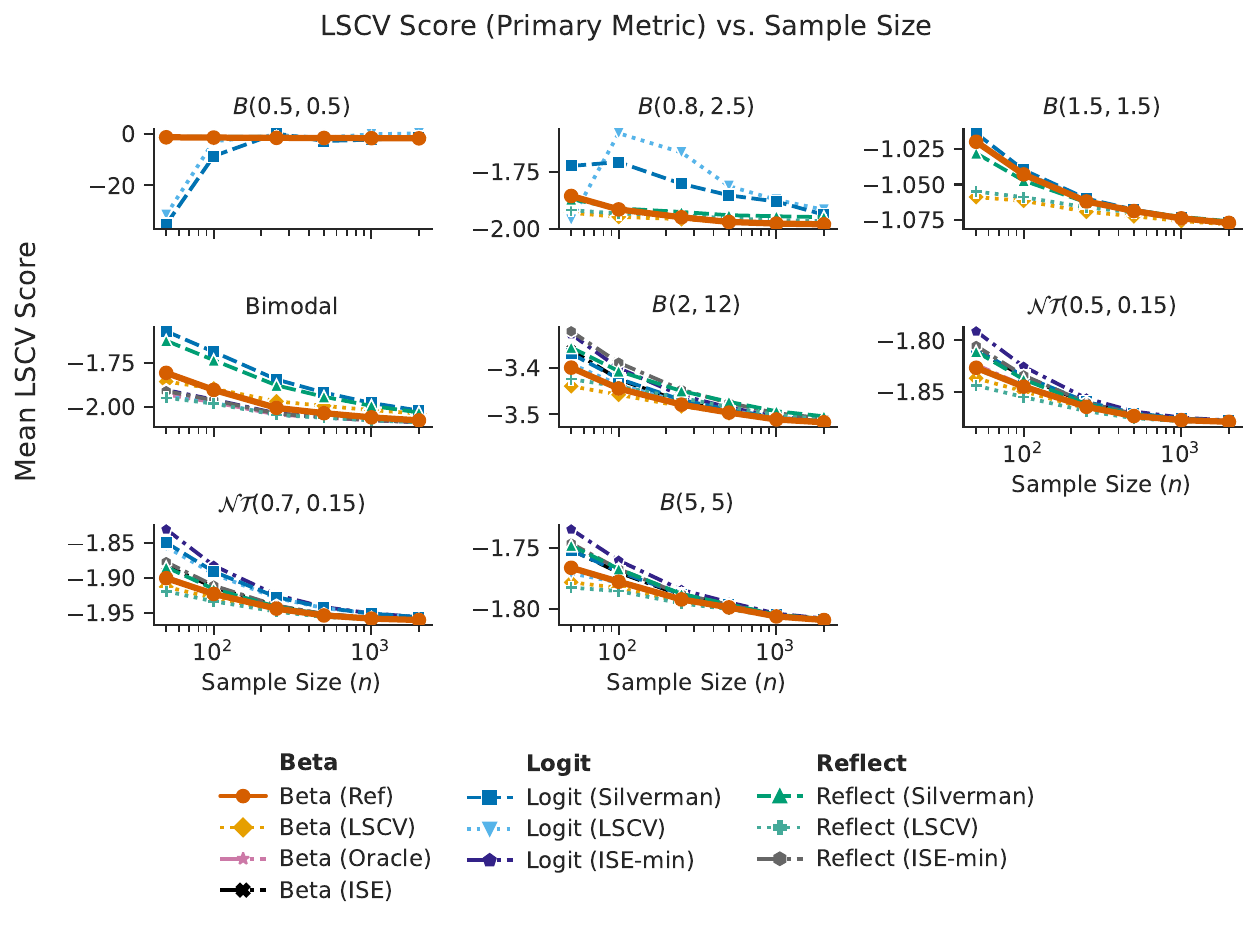}
            \caption{Mean LSCV score as a function of sample size ($n$) across all eight test distributions. Our proposed method, \rot (solid line), is highly competitive in this regard. It closely tracks the performance of the best slow-optimization methods on the ``nice'' distributions (e.g., $B(2,12)$) while demonstrating superior performance and stability on the ``hard'' (e.g., $B(0.5, 0.5)$) and ``bimodal'' distributions as $n$ increases.}
            \label{fig:LSCV_Score_vs_N}
        \end{figure}

        To validate our findings, we compared the mean integrated squared error (ISE) in Table \ref{tab:main_table_ise}. The ISE scores, which are omitted for ``hard'' distributions owing to their instability, confirm and strengthen our findings from the LSCV analysis.

        On ``nice'' distributions, our \rot (mean ISE: 0.0313; median: 0.0149) is statistically significantly more accurate ($p < 0.001$) than all other practical methods, including the slow \blscv (mean ISE: 0.0358; median: 0.0161) and the competing fast rules \lsilv  (mean ISE: 0.0431; median: 0.0206) and \rsilv (mean ISE: 0.0413; median: 0.0219).
        
        The results of the ``bimodal'' distribution are even more stark. The fast Gaussian-based rules (\lsilv, \rsilv) failed completely, producing extremely high mean error scores (0.2377 and 0.2067, respectively) in this study. In contrast, \rot remains highly robust (0.0914), again proving to be statistically superior ($p < 0.001$) to its slow counterpart, \blscv (0.1104). 
        Crucially, this robustness is not because the beta reference distribution successfully models bimodality, but rather because the method successfully detects that it cannot. For bimodal mixtures, the method-of-moments estimates $(\hat{a}, \hat{b})$ typically fall into the invalid domain ($\leq 1.5$), effectively flagging a violation of the unimodal assumption. This automatically triggers the fallback heuristic (as confirmed in the detailed simulation results (see Supplementary Material), where the fallback rate is $>99\%$). Unlike standard Gaussian reference rules, which blindly apply a global bandwidth that over-smooths multimodal structures, our approach defaults to a conservative shape-penalized bandwidth that preserves the density features.
        
        \begin{table}[h!]
            \centering
\begin{tabular}{lcc}
\toprule
 & 'Nice' Distributions & 'Bimodal' Distribution \\
Method &  &  \\
\midrule
\rott & 0.0313 (0.0149) & 0.0914 (0.0569) \\
\blscvt & 0.0358 (0.0161)$^{***}$ & 0.1104 (0.0965)$^{***}$ \\
\biset & \textbf{0.0272 (0.0130)}$^{***}$ & \textbf{0.0381 (0.0219)}$^{***}$ \\
\oraclet & 0.0297 (0.0144)$^{***}$ & 0.0403 (0.0231)$^{***}$ \\
\lsilvt & 0.0431 (0.0206)$^{***}$ & 0.2377 (0.1948)$^{***}$ \\
\llscvt & 0.0604 (0.0256)$^{***}$ & 0.0515 (0.0287)$^{***}$ \\
\liset & 0.0383 (0.0182)$^{***}$ & 0.0415 (0.0237)$^{***}$ \\
\rsilvt & 0.0413 (0.0219)$^{***}$ & 0.2067 (0.1674)$^{***}$ \\
\rlscvt & 0.0534 (0.0227)$^{***}$ & 0.0501 (0.0278)$^{***}$ \\
\riset & 0.0333 (0.0157) & 0.0397 (0.0228)$^{***}$ \\
\bottomrule
\end{tabular}

            \caption{Mean ISE scores (median in parentheses) across distribution groups. Bold indicates the best median per group. Significance of Wilcoxon signed-rank tests vs. the reference method: $^{*}p<0.05$, $^{**}p<0.01$, $^{***}p<0.001$.}
            \label{tab:main_table_ise}
        \end{table}

        These results are visually confirmed in Figure \ref{fig:ISE_Score_vs_N}, which shows the ISE with respect to the sample size. The line for \rot (Beta (Ref)) consistently tracks just above the oracle methods (\bise, \oracle) and well below the competing kernel families, demonstrating its superior performance.
        
        \begin{figure}[h!]
            \centering
            \includegraphics[width=\textwidth]{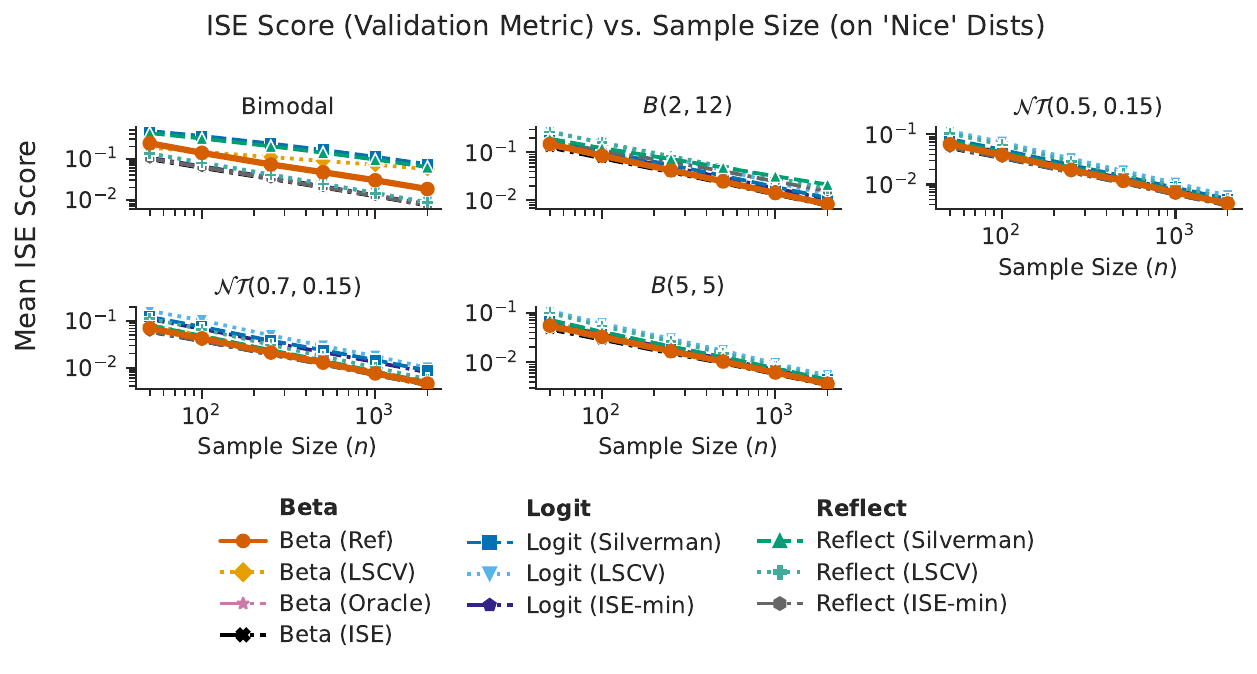}
            \caption{Mean ISE (log-scale) as a function of sample size ($n$, log-scale) for the ``nice'' and ``bimodal'' distributions. This plot visually confirms the findings presented in Table \ref{tab:main_table_ise}. Our proposed rule, \rot (solid line), is shown to be highly accurate, with its performance line consistently tracking just above the oracle methods (\bise, \oracle) and visibly outperforming all competing fast rules (\lsilv , \rsilv) and the slow \blscv.}
            \label{fig:ISE_Score_vs_N}
        \end{figure}

        Finally, we analyze the practical cost of these methods in Table \ref{tab:main_table_time} and Figure \ref{fig:Comp_Time_vs_N}. The results are unambiguous: the \rot, along with the other fast rules (\lsilv, \rsilv), is instantaneous, requiring, on average, 0.0001s regardless of the complexity of the distribution.

        This is in stark contrast to all optimization-based methods. For example, the \blscv method is more than 35, 000 times slower (3.5567s) on ``nice'' data, with \rlscv being more than 217, 000 times slower (21.7347s). Figure \ref{fig:Comp_Time_vs_N} visualizes this difference, showing that the LSCV and ISE methods are orders of magnitude slower than the fast methods, which are clustered on the $x$-axis.
        
        In summary, the \rot provides a highly competitive and practical choice for density estimation in bounded domains, as it uniquely balances computational efficiency with robustness against complex density shapes.

        \begin{table}[h!]
            \centering
            \resizebox{\textwidth}{!}{
\begin{tabular}{lccc}
\toprule
 & 'Nice' Distributions & 'Bimodal' Distribution & 'Hard' Distributions \\
Method &  &  &  \\
\midrule
\rott & 0.0001 & 0.0001 & 0.0001 \\
\blscvt & 3.5567 & 3.5616 & 3.6010 \\
\biset & 1.1719 & 1.6634 & - \\
\oraclet & 0.0000 & 0.0000 & - \\
\lsilvt & 0.0001 & 0.0001 & 0.0001 \\
\llscvt & 2.1040 & 2.1839 & 2.1680 \\
\liset & 0.5782 & 0.8709 & - \\
\rsilvt & 0.0001 & 0.0001 & 0.0001 \\
\rlscvt & 21.7347 & 21.6673 & 25.3464 \\
\riset & 0.3862 & 0.7814 & - \\
\bottomrule
\end{tabular}

            }
            \caption{Mean computation times in seconds across distribution groups.}
            \label{tab:main_table_time}
        \end{table}
        
        \begin{figure}[h!]
            \centering
            \includegraphics[width=\textwidth]{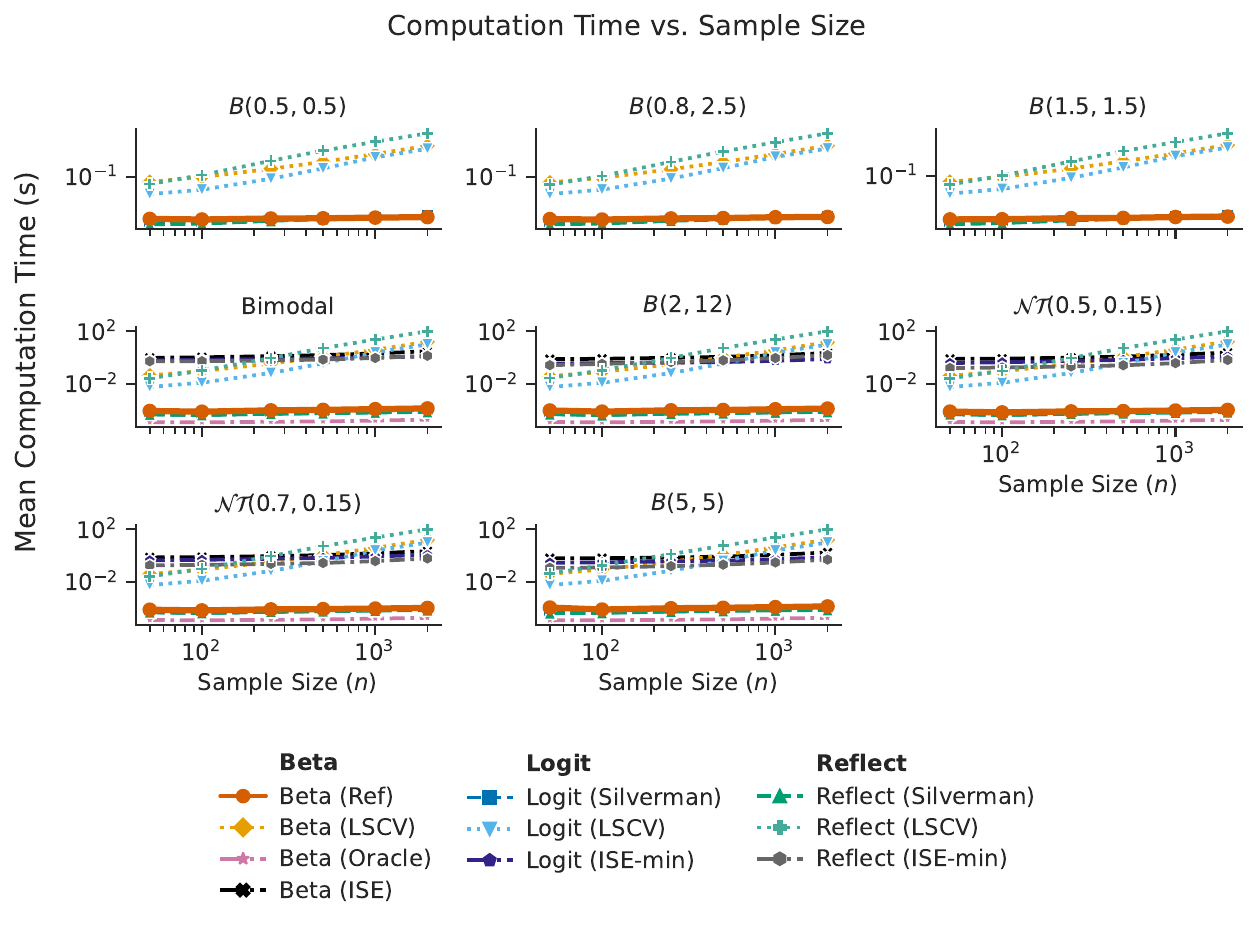}
            \caption{Mean computation time (log-scale) as a function of sample size ($n$, log-scale) across all eight test distributions. This plot visually illustrates the results in Table \ref{tab:main_table_time}. It shows a clear separation between the two performance classes: the fast methods, including our \rot (Beta (Ref)), which are clustered at the bottom with a near-constant cost of approximately $10^{-4}$ s. In contrast, all Slow (LSCV) and Benchmark (Oracle) methods are orders of magnitude slower, and their computational cost clearly increases with the sample size $n$.}
            \label{fig:Comp_Time_vs_N}
        \end{figure}

        To understand why our \rot achieves such a high level of accuracy, Figure \ref{fig:Bandwidth_vs_N_NiceDists} shows the bandwidth ($h$) that was selected. The figure compares the bandwidth from our rule (solid line) to the optimal bandwidths derived from the oracle methods (\bise and \oracle) for the ``nice'' and ``bimodal'' distributions. The plot clearly shows that the bandwidth selected by \rot successfully tracked the true optimal bandwidth across all sample sizes and distributions. This demonstrates that our rule is not only a fast approximation but also effectively identifies and converges to the asymptotically optimal bandwidth for the beta kernel for all datasets.
        
        \begin{figure}[h!]
            \centering
            \includegraphics[width=\textwidth]{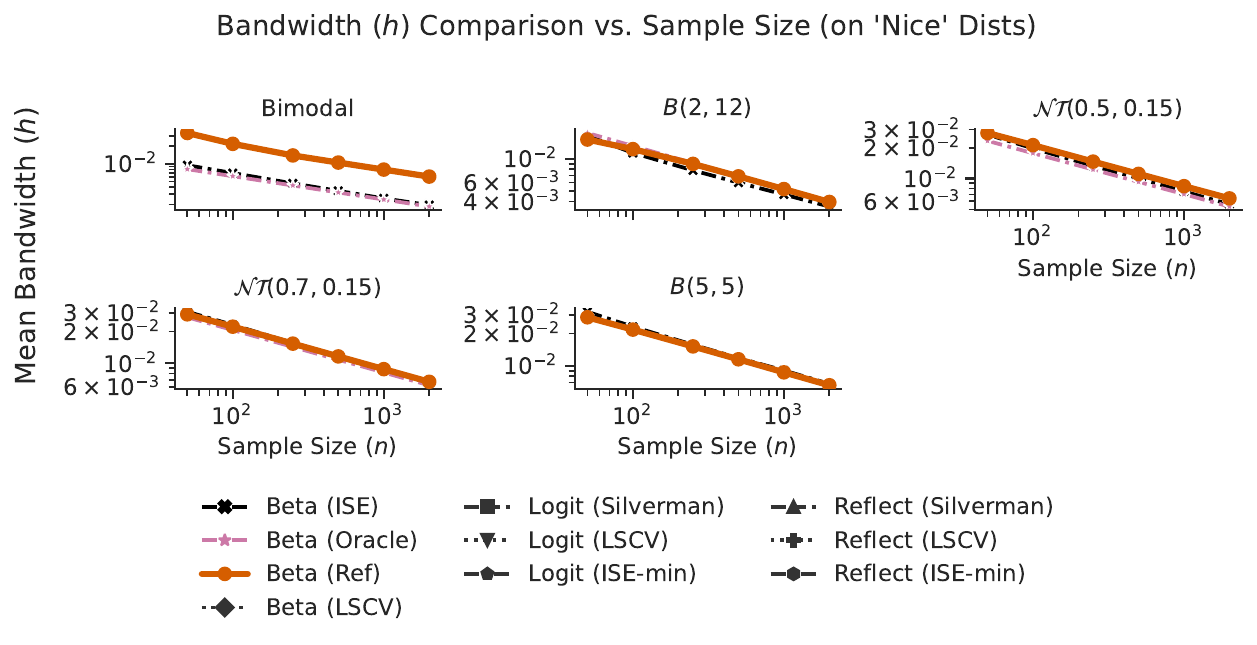}
            \caption{Mean selected bandwidth ($h$, log-scale) as a function of sample size ($n$, log-scale) for the ``nice'' and ``bimodal'' distributions. This plot compares the bandwidth selected by our proposed fast rule (``Beta (Ref)'') to the optimal bandwidths derived from the oracle methods (\bise, ``Beta (ISE-min)'' and \oracle, ``Beta (Oracle)''). The \rot bandwidth is shown to closely track the optimal oracle bandwidths across all distributions and sample sizes, visually confirming the accuracy of its derivation.}
            \label{fig:Bandwidth_vs_N_NiceDists}
        \end{figure}

    \subsection{Experiment 2: Real-world application results}\label{subsec:real-world-results}

        In this experiment, we evaluated the performance of bandwidth selectors on three real-world variables from the Communities and Crime dataset: PctKids2Par, PctPopUnderPov, and PctVacantBoarded. Unlike in the simulation, the actual density was unknown. Therefore, we relied on the LSCV score (computed using the exact leave-one-out formula on the full dataset) as our primary accuracy metric, along with the computation time, to evaluate scalability. To understand the mechanism underlying the performance of the proposed rule, we recorded the percentage of trials triggered by the \rot fallback heuristic.

        The density estimates are shown in Figure \ref{fig:experiment_2_visual_fits} and the quantitative results are summarized in Table \ref{tab:experiment2}. Detailed statistical test results, including Wilcoxon signed-rank p-values and secondary log-likelihood metrics, are provided in the Supplementary Material.
        \begin{figure}[h!]
            \centering
            \includegraphics[width=\textwidth]{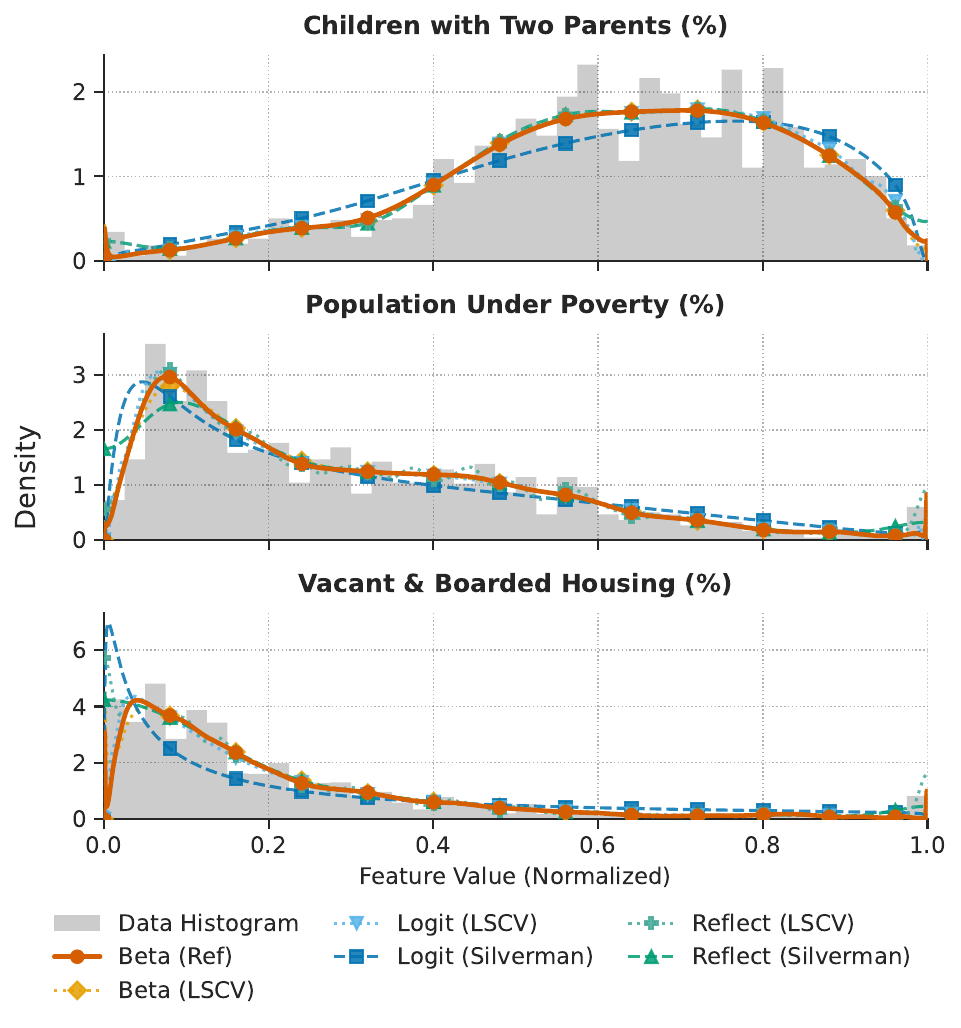}
            \caption{Density estimates for \texttt{PctKids2Par}, \texttt{PctPopUnderPov}, and \texttt{PctVacantBoarded}, comparing the proposed \rot against optimization-based competitors and alternative kernel families.}
            \label{fig:experiment_2_visual_fits}
        \end{figure}

        The results reveal a striking dichotomy driven by the nature of the data. On the ``nice” bell-shaped PctKids2Par distribution (Figure \ref{fig:experiment_2_visual_fits}a), the data were well-approximated by a beta distribution. Consequently, the \rot rarely engages its fallback heuristic (0.0\% usage; see Table \ref{tab:experiment2}). In this regime, the rule performed as a high-quality approximation, achieving an LSCV score of -1.4030, which was virtually identical to the optimization-based \blscv (-1.4031) but was computed approximately 75000 times faster (0.0002s vs. 15.0s).

        The strengths of the proposed rule are most apparent in the ``messy'', boundary-biased distributions: PctPopUnderPov and PctVacantBoarded (Figures \ref{fig:experiment_2_visual_fits}b and c). These distributions violate the standard beta assumptions (e.g., $\ahat, \bhat > 3/2$), causing the standard LSCV optimization to become unstable and producing under-smoothed, ``wiggly'' estimates that overfit the data spikes.

        In contrast, the \rot automatically detected this violation and engaged its fallback heuristic in 100\% of the cases (Table \ref{tab:experiment2}). This mechanism acts as a robust regularizer that produces stable and smooth density estimates. On PctPopUnderPov, this stability translates into a superior LSCV score for \rot (-1.5949) compared to the unstable \blscv (-1.5913). Although the computationally expensive \llscv and \rlscv methods achieved the lowest (best) LSCV scores overall, they required between 7.1 and 88.3 s for computation. The proposed \rot offers a unique value proposition: it provides an instantaneous and robust estimate that avoids the pathological failure modes of standard optimization while outperforming naive Gaussian rules (\lsilv, \rsilv).
        
        \begin{table}[h!]
            \centering
            \resizebox{\textwidth}{!}{
\begin{tabular}{lcccccc}
\hline
\textbf{Dataset} & \textbf{Method} & \textbf{LSCV Score} & \textbf{Heldout Density} & \textbf{Time (s)} & \textbf{Fallback Rate} \\ \hline
\multirow{6}{*}{\textit{PctKids2Par}} 
 & \rott & -1.4030 & 1.376 (1.380) & 0.0002 & 0\% \\
 & \blscvt & -1.4031 & 1.386 (1.390) & 15.0 & - \\
 & \lsilvt & -1.3691$^{***}$ & 1.316 (1.316)$^{***}$ & 0.0001 & - \\
 & \llscvt & -1.4016 & 1.392 (1.395) & 7.1 & - \\
 & \rsilvt & \textbf{-1.4037} & \textbf{1.394 (1.399)} & $<0.0001$ & - \\
 & \rlscvt & -1.4037 & 1.395 (1.399) & 71.6 & - \\
\hline
\multirow{6}{*}{\textit{PctPopUnderPov}} 
 & \rott & -1.5949 & 1.583 (1.588) & 0.0001 & 100\% \\
 & \blscvt & -1.5913$^{***}$ & 1.563 (1.563)$^{***}$ & 14.3 & - \\
 & \lsilvt & -1.5470$^{***}$ & 1.522 (1.519)$^{***}$ & $<0.0001$ & - \\
 & \llscvt & -1.6018 & 1.596 (1.601) & 7.1 & - \\
 & \rsilvt & -1.5589$^{***}$ & 1.532 (1.530)$^{***}$ & $<0.0001$ & - \\
 & \rlscvt & \textbf{-1.6063} & \textbf{1.603 (1.607)} & 74.1 & - \\
\hline
\multirow{6}{*}{\textit{PctVacantBoarded}} 
 & \rott & -2.1066 & 2.203 (2.201) & 0.0001 & 100\% \\
 & \blscvt & -2.1406$^{***}$ & 2.163 (2.166)$^{***}$ & 13.3 & - \\
 & \lsilvt & -1.5746$^{***}$ & 1.859 (1.856)$^{***}$ & $<0.0001$ & - \\
 & \llscvt & -2.2250 & 2.206 (2.199) & 10.0 & - \\
 & \rsilvt & -2.5611 & 2.525 (2.527) & $<0.0001$ & - \\
 & \rlscvt & \textbf{-2.6565} & \textbf{2.624 (2.622)} & 88.3 & - \\
\hline
\end{tabular}

            }
            \caption{Experiment 2 results on real-world datasets. LSCV scores (lower is better), mean heldout density with median in parentheses (higher is better; bold indicates best median), computation time, and fallback rate. Significance of Wilcoxon signed-rank tests vs. the reference method: $^{*}p<0.05$, $^{**}p<0.01$, $^{***}p<0.001$.}
            \label{tab:experiment2}
        \end{table}

\section{Concluding discussion}
    The beta kernel estimator has long been recognized as a theoretically superior alternative to the Gaussian kernel for bounded-support data. By naturally matching the support of the kernel to the domain of the data, it eliminates boundary bias without artifacts introduced by reflection or transformation. However, despite these clear advantages, this method remains a specialist tool, limited by a single practical bottleneck: the lack of a simple, reliable, and fast bandwidth-selection method. In this study, we aimed to remove these barriers.

    By minimizing the asymptotic mean integrated squared error (AMISE) for a beta reference distribution, we derived a fast analytic rule of thumb for the bandwidth $h$. The empirical results are unambiguous: our method matches the accuracy of computationally expensive LSCV optimization on standard distributions while offering a computational speedup of over five orders of magnitude compared with the latter. As with any rule of thumb based on a single reference distribution, our method is theoretically suboptimal for multimodal densities. However, our simulations indicate that even in these ``bimodal'' scenarios, the proposed fallback rule remains highly effective, outperforming the numerically unstable LSCV in terms of integrated squared error (ISE). This suggests that the stability of parametric approximation often outweighs the theoretical flexibility of optimization methods in finite-sample settings.

    We also explored bandwidth selectors based on minimizing Kullback-Leibler divergence (approximating the integrated chi-squared error). Although this metric simplifies the derivation by canceling density terms, yielding closed-form solutions that are simple polynomials in $\ahat$ and $\bhat$, it implicitly weights the errors by $1/f(x)$. This weighting causes severe integrability issues at the boundaries for $a,b < 2$, effectively prioritizing the tail fit over the mode. This mirrors the convergence challenges noted in the bias-correction literature, which often necessitates aggressive weighting functions (e.g., $w(x)=x^5(1-x)^5$ in \citet{hirukawa2010nonparametric} or $w(x)=x^3(1-x)^3$ in \citet{jones2007kernel}) to remain solvable. Our unweighted $L_2$ approach, while algebraically more complex and involving the roughness functional $I_2$, provides a more balanced global fit and avoids these artificial stabilizers.

    Crucially, we address the practical reality of ``hard'' (U-shaped and J-shaped) distributions, in which standard asymptotic approximations frequently fail. Consequently, our proposed method functions as a composite bandwidth selector: it utilizes the rigorous AMISE-derived formula for standard distributions but automatically transitions to a principled skewness-kurtosis heuristic when the data violate the regularity conditions of the reference family. This hybrid approach ensures robustness, preventing the numerical instability observed in the standard LSCV, while yielding superior density estimates in difficult boundary-concentrated cases.

    We also explicitly addressed the theoretical nuances of probability mass conservation. Although the unnormalized beta kernel estimator does not strictly integrate to unity in finite samples, we prove that the deviation decays linearly with the bandwidth ($\mathcal{O}(h)$) and is negligible in practice (typically $<1\%$). By retaining the unnormalized form, we preserved the natural boundary adaptivity of the estimator and demonstrated its superior performance under the MISE metric, demonstrating that the benefits of bias reduction far outweigh the costs of minor mass deviations. However, for practical deployment, particularly in visualization pipelines or probabilistic modeling, we recommend a simple post-hoc renormalization ($\ftwo^{norm} = \ftwo / \int \ftwo$) to ensure a strict unit probability mass; our provided Python package includes this as a built-in option.

    Ultimately, this study positions the beta kernel as a drop-in replacement for the Gaussian kernel in modern data science workflows. With the accompanying open-source Python package (see Appendix \ref{app:package}), practitioners can now leverage the superior boundary properties of the beta kernel with the same computational ease and $\mathcal{O}(1)$ efficiency as standard methods. Future work could extend this closed-form derivation logic to other asymmetric kernels, such as the gamma or inverse Gaussian kernels, to further democratize boundary-corrected density estimation across different bounded domains.

    We explicitly note two inherent limitations of the proposed method. First, as is the case with any reference-based bandwidth selector (such as Silverman's rule for the Gaussian kernel), the closed-form selector $h_{ref}$ is mathematically sub-optimal for densities that cannot be well-approximated by the chosen reference family, such as strongly bimodal or multimodal mixtures. Second, while many U-shaped and J-shaped densities are natural members of the beta reference family, the rigorous derivation of $h_{ref}$ is only valid when the estimated shape parameters satisfy $\hat{a}, \hat{b} > 1.5$. Below this threshold, the unweighted asymptotic roughness functional diverges, meaning a finite MISE-optimal bandwidth strictly does not exist. Consequently, for these boundary-accumulating shapes, the method must transition to the heuristic fallback rule ($h_{heur}$) to regularize the bandwidth. While our ablation study demonstrates that this hybrid approach remains highly effective in practice, deriving a strictly convergent, closed-form asymptotic approximation for these divergent regimes remains a challenging open problem for future research.
    
\section{Acknowledgment}
    The authors would like to thank the Editor, Associate Editor, and two anonymous reviewers for their highly constructive feedback and insightful suggestions, which have significantly improved the theoretical depth and empirical clarity of this manuscript. 

    During the preparation of this manuscript and the accompanying response to reviewers, the authors utilized AI-assisted technologies. Paperpal (overleaf plug-in version 2.0.3) was employed for grammar checking and language polishing. Gemini 3.1 Pro was used as a conversational sounding board to facilitate conceptual discussions during the writing process. Additionally, GitHub Copilot was used to assist in structuring and formatting the Python code provided in the supplementary materials. Following the use of these tools, the authors rigorously reviewed, edited, and verified all content, and take full responsibility for the manuscript's originality, methodology, and scientific findings.

\section{Funding}
    This work was supported by the Swedish Knowledge Foundation through the SPARK Research Environment at Jönköping University (Project PREMACOP, grant no. 20220187)

\section{Disclosure statement}\label{disclosure-statement}

    The authors declare no conflicts of interest.

\section{Data Availability Statement}\label{data-availability-statement}
The ``Communities and Crime'' dataset \citet{communities_and_crime_183} analyzed in this study is publicly available from the UCI Machine Learning Repository. The scripts required to download the specific subsets used for the cross-validation experiments in Sections \ref{subsec:real-world-experiment} and \ref{subsec:real-world-results} are available at 
\href{https://github.com/egonmedhatten/beta-kernel-reproduce-paper}{https://github.com/egonmedhatten/beta-kernel-reproduce-paper}.

\bibliography{references}

\appendix

\section{Computing $I_2$}\label{app:computingI2}
    We seek to find a closed-form solution to the integral \eqref{eq:I2} (reproduced here for convenience)
    \begin{equation}
        I_2 := \int_0^1(x(1-x)f''(x))^2dx.
    \end{equation}

    Our first order of business is to derive a suitable expression for the second derivative of the beta density. For ease of notation, write
    \begin{equation}\label{eq:betaDensityManipulate}
        f(x) = Cx^{a-1}(1-x)^{b-1}
    \end{equation}
    where $C := \frac{1}{B(a,b)} = \frac{\Gamma(a+b)}{\Gamma(a)\Gamma(b)}$.
    Differentiating twice yields
    \begin{equation}
        \begin{aligned}
            f''(x) = C\bigg((a-2)(a-1)x^{a-3}(1-x)^{b-1} &- 2(a-1)(b-1)x^{a-2}(1-x)^{b-2} \\
            &+ (b-2)(b-1)x^{a-1}(1-x)^{b-3}\bigg).
        \end{aligned}
    \end{equation}
    Next, we factor out the lowest powers of $x$ and $(1-x)$, that is, $x^{a-3}(1-x)^{b-3}$. We get
    \begin{equation}\label{eq:secondDerivativePolyFactor}
        f''(x) = Cx^{a-3}(1-x)^{b-3}P_2(x),
    \end{equation}
    where 
    \begin{equation}\label{eq:integrandPolynomial}
        P_2(x) := (a-2)(a-1)(1-x)^2 - 2(a-1)(b-1)x(1-x) + (b-2)(b-1)x^2
    \end{equation}
    is a polynomial of degree two in $x$. The integrand in \eqref{eq:I2} can thus be written as 
    \begin{equation}\label{eq:MISEintegrand}
        T(x) := (x(1-x)Cx^{a-3}(1-x)^{b-3}P_2(x))^2 = C^2x^{2a-4}(1-x)^{2b-4}P^2_2(x).
    \end{equation}
    The factor $P_2^2(x)$ is a polynomial of degree four, that can be written as $\sum_{k=0}^4d_kx^k$, where $d_k$ are coefficients (rather complicated expressions in $a$ and $b$). 
    Therefore, we can rewrite \eqref{eq:MISEintegrand} as
    \begin{equation}\label{eq:MISEintegrandReady}
        T(x) = C^2 \sum_{k=0}^4d_kx^{2a-4+k}(1-x)^{2b-4}.
    \end{equation}
    Plugging in the integrand \eqref{eq:MISEintegrandReady} into \eqref{eq:I2}, yields
    \begin{equation}
        \begin{aligned}
            I_2 &= \int_0^1C^2 \sum_{k=0}^4d_kx^{2a-4+k}(1-x)^{2b-4}dx \\
            &= C^2 \sum_{k=0}^4d_k\int_0^1x^{2a-4+k}(1-x)^{2b-4}dx \\
            &= C^2 \sum_{k=0}^4d_k B(2a-3+k,2b-3).
        \end{aligned}
    \end{equation}
    This is the sum of five beta functions weighted by the rather complicated coefficients $d_k$. This is defined only if $a,b>3/2$. A computer algebra system (CAS) was used to symbolically expand $P_2^2(x)$, compute the sum, and simplify the resulting expression using gamma function identities. The final result is
    \begin{equation}
        I_2 = \frac{(a-1) (b-1) (a (3 b-4)-4 b+6) \Gamma (2 a-3) \Gamma (2 b-3) \Gamma (a+b)^2}{(2 a+2 b-5) (2 a+2 b-3) \Gamma (a)^2 \Gamma (b)^2 \Gamma (2 a+2 b-6)}
    \end{equation}
    which holds true for $a,b>3/2$.

\section{Absolute mass error}\label{app:AME}
    This appendix provides a rigorous derivation of the asymptotic deviation from the unit probability mass for the beta kernel estimator $\ftwo$. We first derive the general form of the integrated bias in terms of the second derivative of the density (Proposition \ref{prop:massDeviationGeneral}), and subsequently evaluate this integral analytically using integration by parts (Proposition \ref{prop:massDeviationSpecific}).
    
    \begin{proposition}\label{prop:massDeviationGeneral}
    Let $h$ be the smoothing bandwidth, and assume that the true density $f$ is twice continuously differentiable in $[0,1]$. The expected total probability mass of the unnormalized beta kernel estimator $\ftwo$ satisfies
    \begin{equation}
    \int_0^1 \mathbb{E}[\ftwo(x)] dx = 1 + \frac{h}{2} \int_0^1 x(1-x)f''(x) dx + \mathcal{O}(h^2)
    \end{equation}
    \end{proposition}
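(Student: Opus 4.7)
The plan is to compute the pointwise expectation $\mathbb{E}[\ftwo(x)]$ via Chen's bias expansion, then integrate over $x \in [0,1]$, handling the interior region $[2h, 1-2h]$ and the boundary region $B_h := [0, 2h) \cup (1-2h, 1]$ separately.

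First, I would write $\mathbb{E}[\ftwo(x)] = \int_0^1 K^*_{x,h}(t) f(t) dt = \mathbb{E}[f(T_x)]$, where $T_x$ denotes a random variable with density $K^*_{x,h}$. For interior $x \in [2h, 1-2h]$, the kernel is Beta$(x/h, (1-x)/h)$, so a direct computation gives $\mathbb{E}[T_x] = x$ and $\mathrm{Var}(T_x) = x(1-x)h/(1+h) = x(1-x)h + \mathcal{O}(h^2)$, uniformly in $x$.

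Second, I would Taylor-expand $f$ around $t=x$ as $f(t) = f(x) + (t-x)f'(x) + \tfrac12(t-x)^2 f''(x) + R(t,x)$ with $|R(t,x)| = o((t-x)^2)$, take the expectation under $T_x$, and use the moment computation above to obtain
\begin{equation}
\mathbb{E}[\ftwo(x)] - f(x) = \tfrac{h}{2}\, x(1-x) f''(x) + \mathcal{O}(h^2),
\end{equation}
uniformly on $[2h, 1-2h]$. This is precisely Chen's interior bias expansion and does not need to be re-derived here.

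Third, I would handle the boundary region. The whole point of the modified parameter $\rho(x,h)$ in the definition of $K^*_{x,h}$ is that Chen's analysis guarantees $\mathbb{E}[\ftwo(x)] - f(x) = \mathcal{O}(h)$ uniformly on $B_h$ (no $\mathcal{O}(1)$ boundary bias). Since $|B_h| = 4h$, this yields
\begin{equation}
\int_{B_h}\bigl[\mathbb{E}[\ftwo(x)] - f(x)\bigr]dx = \mathcal{O}(h)\cdot\mathcal{O}(h) = \mathcal{O}(h^2).
\end{equation}
Combining with the interior expansion and using $\int_0^1 f(x)dx = 1$,
\begin{equation}
\int_0^1 \mathbb{E}[\ftwo(x)]dx = 1 + \tfrac{h}{2}\int_{2h}^{1-2h} x(1-x)f''(x)dx + \mathcal{O}(h^2).
\end{equation}
Since $x(1-x)f''(x)$ is bounded on $[0,1]$ (as $f\in C^2$), extending the limits of integration to $[0,1]$ introduces an error of size $h \cdot \mathcal{O}(h) = \mathcal{O}(h^2)$, which proves the claim.

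The main obstacle is verifying that the uniform $\mathcal{O}(h)$ control on the boundary bias actually holds, since a naive bound on $\mathbb{E}[\ftwo(x)]-f(x)$ near the endpoints could be $\mathcal{O}(1)$, in which case the boundary contribution would dominate rather than absorb into the $\mathcal{O}(h^2)$ remainder. This uniform control is exactly the property that motivates Chen's specific choice of $\rho(x,h)$, and I would either cite Chen's derivation of the mean and variance of $K^*_{x,h}$ at boundary points or reproduce the short computation showing that $\mathbb{E}[T_x] = x + \mathcal{O}(h)$ and $\mathrm{Var}(T_x) = \mathcal{O}(h)$ there, which together with a Taylor expansion of $f$ yield the needed uniform $\mathcal{O}(h)$ bias bound.
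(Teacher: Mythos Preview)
Your proposal is correct and follows essentially the same approach as the paper: split the integral of the bias into interior and boundary regions, invoke Chen's uniform $\mathcal{O}(h)$ bias bound to show the boundary contribution is $\mathcal{O}(h^2)$, use Chen's interior bias expansion $\tfrac{h}{2}x(1-x)f''(x)+\mathcal{O}(h^2)$, and then extend the integration limits back to $[0,1]$ at an $\mathcal{O}(h^2)$ cost. The only difference is cosmetic---you sketch the moment-and-Taylor derivation of the interior bias rather than citing it outright, which is fine and arguably more self-contained.
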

    
    \begin{proof}
    Let $B(x) = \mathbb{E}[\ftwo(x)] - f(x)$ denote the bias. We split the total integral into boundary regions $[0, 2h) \cup (1-2h, 1]$ and interior regions $[2h, 1-2h]$.
    $$ \int_0^1 B(x) dx = \underbrace{\int_0^{2h} B(x) dx + \int_{1-2h}^1 B(x) dx}_{\text{Boundary Contribution}} + \underbrace{\int_{2h}^{1-2h} B(x) dx}_{\text{Interior Contribution}} $$
    
    \noindent \textbf{1. Boundary Contribution:}
    \cite{chen1999beta} shows that the bias $B(x)$ is of order $\mathcal{O}(h)$ everywhere on $[0,1]$. The combined width of the boundary regions was $4h$. Therefore, the absolute contribution of the boundaries is bounded by
    $$ \left| \int_{\text{Boundaries}} B(x) dx \right| \leq \text{Width} \times \max |B(x)| = 4h \times \mathcal{O}(h) = \mathcal{O}(h^2) $$
    \noindent \textbf{2. Interior Contribution:}
    In the interior region, according to \citet{chen1999beta}, the bias is given by $B(x) = \frac{h}{2}x(1-x)f''(x) + \mathcal{O}(h^2)$. Integrating this term:
    $$ \int_{2h}^{1-2h} B(x) dx = \frac{h}{2} \int_{2h}^{1-2h} x(1-x)f''(x) dx + \mathcal{O}(h^2) $$
    The limits of the integral can be extended from $[2h, 1-2h]$ to the full interval $[0,1]$. The error introduced by adding the boundary segments back into is once again the integral of a bounded function over a region of width $4h$, which is $\mathcal{O}(h^2)$. Thus:
    $$ \int_{2h}^{1-2h} B(x)dx = \frac{h}{2} \int_0^1 x(1-x)f''(x) dx + \mathcal{O}(h^2) $$
    The addition of the true unit mass $\int_0^1 f(x)dx = 1$ completes the proof.
    \end{proof}
    
    \begin{proposition}\label{prop:massDeviationSpecific}
    For any twice continuously differentiable probability density function $f$ on $[0,1]$, the following identity holds:
    \begin{equation}
    \int_0^1 x(1-x)f''(x) dx = f(0) + f(1) - 2
    \end{equation}
    \end{proposition}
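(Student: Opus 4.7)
My plan is to prove the identity by integration by parts applied twice, exploiting the fact that the weight $x(1-x)$ vanishes at both endpoints and that $f$ integrates to one as a probability density.

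First, I would perform integration by parts on $\int_0^1 x(1-x) f''(x)\, dx$ with $u = x(1-x)$ and $dv = f''(x)\,dx$, giving $du = (1-2x)\,dx$ and $v = f'(x)$. The boundary term $[x(1-x) f'(x)]_0^1$ vanishes identically because $x(1-x) = 0$ at both $x = 0$ and $x = 1$, reducing the problem to $-\int_0^1 (1-2x) f'(x)\, dx$.

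Second, I would apply integration by parts again with $u = 1-2x$ and $dv = f'(x)\,dx$, so $du = -2\,dx$ and $v = f(x)$. This produces the boundary term $[(1-2x) f(x)]_0^1 = -f(1) - f(0)$ plus the interior contribution $2 \int_0^1 f(x)\, dx$. Using the fact that $f$ is a probability density, so that $\int_0^1 f(x)\, dx = 1$, the interior contribution equals $2$. Collecting signs yields $-\int_0^1 (1-2x) f'(x)\, dx = f(0) + f(1) - 2$, which is precisely the claimed identity.

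I do not expect any real obstacle here; the proof is essentially a two-step integration by parts and the only conceptual ingredients are the vanishing of $x(1-x)$ at the endpoints (which neutralizes the first boundary term) and the unit-mass property of $f$ (which evaluates the remaining integral). The smoothness hypothesis $f \in C^2[0,1]$ is exactly what is needed to justify both integrations by parts on the closed interval.
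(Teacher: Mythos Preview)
Your proposal is correct and follows essentially the same approach as the paper: two successive integrations by parts, using the vanishing of $x(1-x)$ at the endpoints to kill the first boundary term and the unit-mass property $\int_0^1 f = 1$ to evaluate the final integral. The only difference is cosmetic presentation of the intermediate signs.
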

    
    \begin{proof}
    A direct calculation, using integration by parts twice, and recalling that $f(x)$ is a probability density function (in particular, it integrates to one) yields
    \begin{equation}
        \begin{aligned}
            \int_0^1 x(1-x)f''(x)dx &= \underbrace{\bigg[x(1-x)f'(x)\bigg]_{0}^1}_{=0} - \int_0^1 (1-2x)f'(x)dx \\
            &= -\bigg[(1-2x)f(x)\bigg]_{0}^1 -2\underbrace{\int_0^1f(x)dx}_{=1} \\
            &= f(0) + f(1) - 2.
        \end{aligned}
    \end{equation}
    \end{proof}
    
    \begin{remark}
        Combining Propositions \ref{prop:massDeviationGeneral} and \ref{prop:massDeviationSpecific}, the total probability mass of the estimator is
        $$ \int_0^1 \ftwo(x)dx = 1 + \frac{h}{2}(f(0) + f(1) - 2) + \mathcal{O}(h^2) $$
        This result indicates that first-order mass conservation depends on the boundary values of the true density. For densities where $f(0) + f(1) > 2$ (for example, ``U-shaped'' distributions), the estimator tends to overestimate the total probability mass. Conversely, for densities where $f(0) + f(1) < 2$ (for example, ``bell-shaped'' distributions vanishing at the boundaries), the mass is underestimated. Note that for distributions where $f(x) \to \infty$ at the boundaries (violating the differentiability assumption), the asymptotic approximation breaks down, potentially leading to larger deviations, as observed in the experiments.
    \end{remark}

    \subsection{Empirical validation from Experiment 1}
        Figure \ref{fig:IntegralError_vs_N} shows the mean absolute deviation from the unit probability mass, $|\int_0^1 \hat{f}_2(x)dx - 1|$, for the test distributions.
    
        As predicted by Proposition \ref{prop:massDeviationGeneral}, the mass error decays asymptotically as the sample size $n$ increases (and bandwidth $h$ decreases). Crucially, for moderate sample sizes ($n \ge 100$), the deviation is consistently small (typically $< 1\%$), confirming that the choice to use the non-normalized estimator has a negligible impact on practical performance compared to the reduction in boundary bias.
    
        \begin{figure}[h!]
            \centering
            \includegraphics[width=\textwidth]{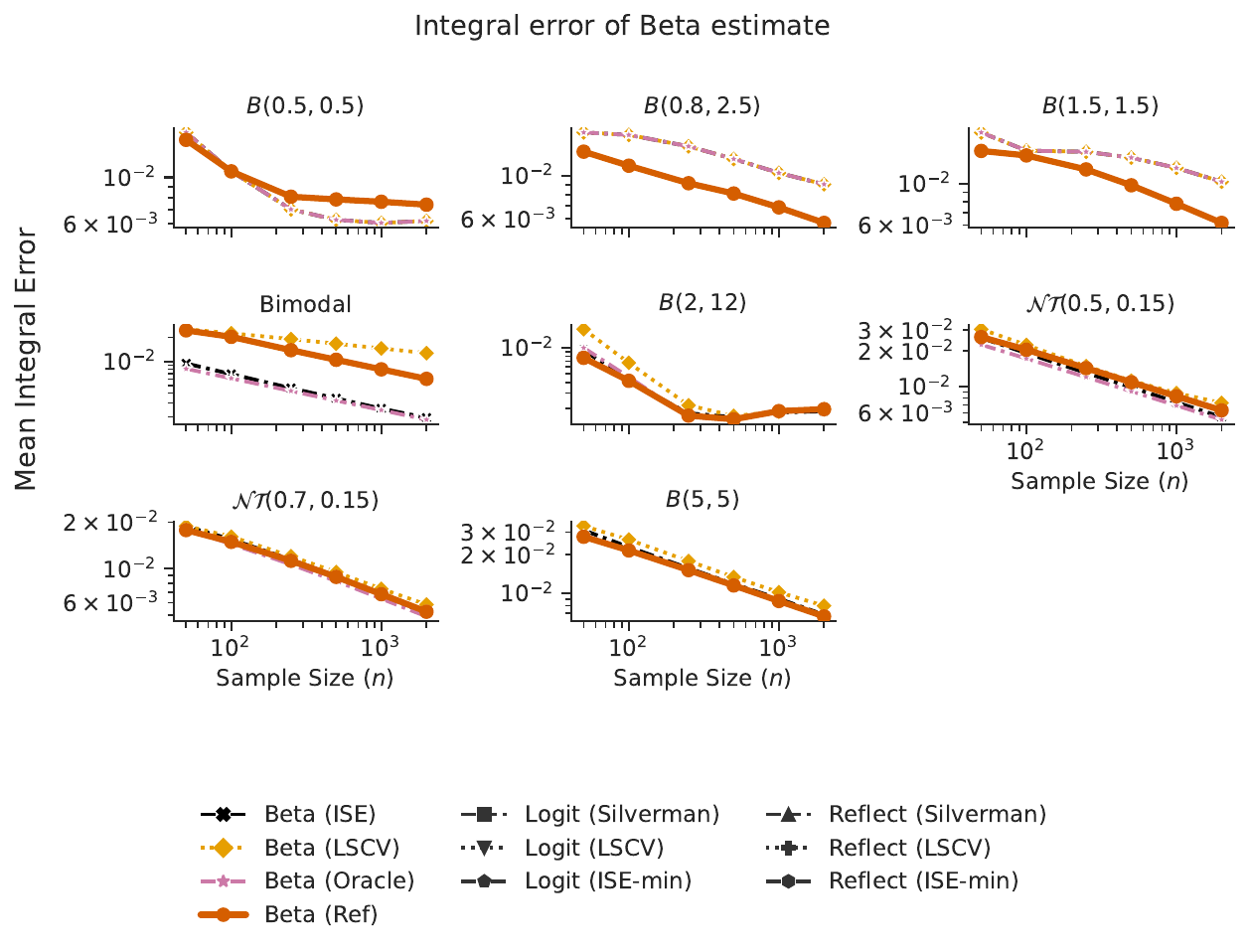}
            \caption{Absolute deviation from unit probability mass $|\int \ftwo(x)dx - 1|$ as a function of sample size $n$. The deviation decays linearly with bandwidth $h$ (and thus with $n$), becoming negligible ($<10^{-2}$) for moderate sample sizes}
            \label{fig:IntegralError_vs_N}
        \end{figure}

\section{Ablation study for the Fallback Rule}\label{app:ablation}
    
    To address the concerns regarding the theoretical motivation for the proposed fallback rule ($\hheur$), we conducted a comprehensive ablation study to justify its specific form. The proposed heuristic explicitly incorporates the sample variance, skewness, and excess kurtosis. To prove that this specific combination is not arbitrary, we evaluated its performance against all partial combinations of these penalty terms: (1) variance only, (2) variance + skewness, and (3) variance + kurtosis.

    The isolated combinations were tested across four archetypal bounded distributions representing different challenges: $B(0.5, 0.5)$ (a heavy-boundary U-shape), $B(0.8, 2.5)$ (an asymmetric J-shape), $B(1.5, 1.5)$ (a symmetric bell shape), and a bimodal mixture distribution (a mixture of $B(10, 30)$ and $B(30, 10)$ with a mixing parameter of 1/2). For each distribution, 1,000 independent trials were simulated across six varying sample sizes ($n=50, 100, 250, 500, 1000, 2000$), totaling 6,000 trials per distribution.

    We report the mean LSCV scores for each form. Because the mean LSCV can occasionally be skewed by extreme optimization failures in nonparametric estimators, we also report robust empirical ``win rates'' (the percentage of trials in which the proposed rule achieved a strictly superior LSCV score compared to the partial baseline) and assess statistical significance using the nonparametric Wilcoxon signed-rank test.

    The results, detailed in Table \ref{tab:ablation} and Figure \ref{fig:ablation}, demonstrate that the full proposed form is the strictly dominant heuristic.

    \begin{table}[h!]
        \centering
        \resizebox{\textwidth}{!}{
\begin{tabular}{lllll}
\toprule
 & B(0.5, 0.5) & B(0.8, 2.5) & B(1.5, 1.5) & BIMODAL \\
Metric &  &  &  &  \\
\midrule
Var Only LSCV & -1.4764 (-1.5095)$^{***}$ & -1.9442 (-1.9552)$^{***}$ & \textbf{-1.0629 (-1.0695)}$^{***}$ & -1.8815 (-1.9334)$^{***}$ \\
Var+Skew LSCV & -1.4880 (-1.5198)$^{***}$ & -1.9410 (-1.9554)$^{***}$ & -1.0626 (-1.0694)$^{***}$ & -1.8846 (-1.9355)$^{***}$ \\
Var+Kurt LSCV & -1.6339 (-1.6751)$^{***}$ & \textbf{-1.9441 (-1.9559)}$^{***}$ & -1.0584 (-1.0680)$^{***}$ & -1.9789 (-2.0226)$^{***}$ \\
Proposed LSCV & \textbf{-1.6384 (-1.6792)} & -1.9387 (-1.9540) & -1.0580 (-1.0680) & \textbf{-1.9795 (-2.0229)} \\
\midrule
Win Rate (vs Var Only) & 94.9\% & 50.6\% & 8.9\% & 96.4\% \\
Win Rate (vs Var+Skew) & 94.6\% & 39.6\% & 8.7\% & 96.4\% \\
Win Rate (vs Var+Kurt) & 91.7\% & 47.8\% & 6.0\% & 96.4\% \\
\bottomrule
\end{tabular}

        }
        \caption{Ablation study of fallback heuristic components across 6,000 trials per distribution. Mean LSCV scores (median in parentheses); lower is better. Bold indicates the best median per distribution. Win rates of the proposed rule. Significance of Wilcoxon signed-rank tests: $^{*}p<0.05$, $^{**}p<0.01$, $^{***}p<0.001$.}
        \label{tab:ablation}
    \end{table}
    
    \begin{figure}[h!]
        \centering
        \includegraphics[width=\textwidth]{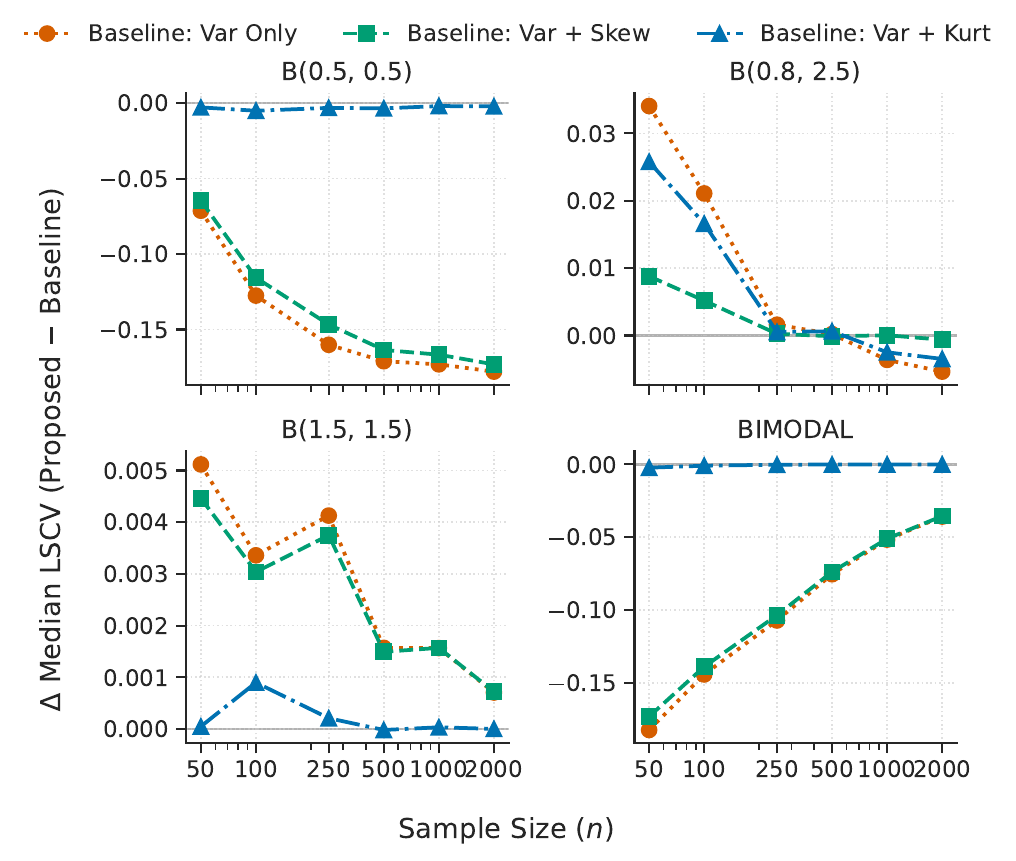}
        \caption{Difference in Median LSCV ($\Delta$) between the proposed rule and partial forms across sample sizes. Values below zero indicate the full proposed rule achieves a better (more negative) score. The full form prevents structural failure on complex distributions (the U-shaped $B(0.5,0.5)$ and Bimodal) with only negligible, asymptotically vanishing trade-offs on simple distributions (J-shaped $B(0.8,2.5)$) and bell shaped $B(1.5,1.5)$}
        \label{fig:ablation}
    \end{figure}
    
    Omitting either the skewness or kurtosis terms leaves the estimator structurally vulnerable to specific distribution shapes. For instance, on the U-shaped $B(0.5, 0.5)$ and Bimodal distributions, the simpler ``Variance Only'' and ``Variance + Skewness'' rules severely underperform. While adding kurtosis (the ``Variance + Kurtosis'' rule) improves performance on these symmetric extremes, it still lacks the necessary asymmetry adjustments.
    
    By incorporating all three terms, the proposed rule explicitly detects these hazards. It achieves massive, statistically significant bandwidth improvements on hard distributions ($p < 0.001$), winning in $>$94\% of trials and yielding a substantially better mean LSCV (-1.6384 on the U-shape and -1.9795 on the Bimodal distribution).
    
    On relatively simple distribution shapes (e.g., $B(1.5, 1.5)$), where boundary accumulation is absent, the higher-order penalty terms remain largely dormant. Although simpler partial rules (such as ``Var+Skew'') may occasionally achieve marginally better efficiency on specific simple asymmetric shapes, such as the J-shaped $B(0.8, 2.5)$, the proposed full rule sacrifices only a microscopic fraction of efficiency in these safe regimes (a difference that vanishes asymptotically as the sample size increases, as confirmed by Figure \ref{fig:ablation}) to prevent the catastrophic structural failures seen on bimodal and U-shaped distributions.

    These results indicate that the specific form of the fallback rule is well-motivated. The ablation study demonstrates that incorporating the full combination of variance, skewness, and kurtosis provides a highly robust heuristic structure. Compared with simpler partial combinations, this full form effectively mitigates the risk of catastrophic boundaries and structural failures, serving as a reliable safeguard across a diverse range of target distribution shapes.
    
\section{Open-Source Software Implementations}\label{app:package}
    To facilitate the practical application of the proposed Hallberg-Szabadváry (HS) bandwidth selector and boundary-corrected beta kernel density estimator, the methods described in this paper have been implemented as open-source software across three major statistical computing environments. The primary goal of these implementations is to provide practitioners with a fast, rule-of-thumb solution for density estimation on bounded intervals that does not rely on computationally expensive or unstable numerical optimization. 

    The implementations are available in the following language ecosystems.
    
    \subsection*{Python implementation (\texttt{beta-kde})}
        We provide a fully documented open-source Python package, \texttt{beta-kde}. The package is available via the Python Package Index (PyPI) and is designed to be API-compatible with standard libraries, such as \texttt{scikit-learn} \citep{scikit-learn,sklearn_api}. 
        The source code of the \texttt{beta-kde} Python package, together with example notebooks, can be found at the following links:
        \begin{itemize}
        \item \href{https://github.com/egonmedhatten/beta-kde}{https://github.com/egonmedhatten/beta-kde}
        \item \href{https://pypi.org/project/beta-kde/}{https://pypi.org/project/beta-kde/}
        \end{itemize}
        This includes efficient implementations of the proposed rule-of-thumb selector, fallback heuristic, and exact LSCV objective functions. By inheriting from \texttt{BaseEstimator}, the package ensures seamless integration with existing machine learning ecosystems, allowing users to leverage standard utilities such as cross-validation strategies and pipeline composition.
    
        Although the primary contribution of this study is the derivation of the bandwidth rule for univariate data, the package also supports practical workflows involving high-dimensional bounded data ($x \in [0,1]^d$). Unlike the standard multivariate KDE, which typically relies on isotropic bandwidths that struggle with bounded hypercubes, our package implements a nonparametric copula strategy. Leveraging Sklar's theorem \citep{sklar1959fonctions}, this approach decomposes the multivariate joint density into univariate marginals and a dependence structure. The package automatically applies the proposed Beta Reference Rule to strictly correct the boundary bias of each univariate marginal. Subsequently, it models the dependence structure (the copula density) using a multivariate product beta kernel estimator on the unit hypercube. This allows practitioners to model bounded multivariate data immediately while strictly respecting the boundary constraints; however, we emphasize that the theoretical analysis of the optimal bandwidth selection for the dependence structure remains a subject for future research.

    \subsection*{R Implementation (\texttt{kdensity})}
        The method is natively integrated into the R statistical computing environment via the standard \href{https://cran.r-project.org/web/packages/kdensity/index.html}{\texttt{kdensity}} package, available on the Comprehensive R Archive Network (CRAN) starting from version 1.2.0. The method is invoked using the \texttt{bw = "HS"} argument. Notably, the HS closed-form reference rule has been adopted as the default bandwidth selector in \texttt{kdensity} whenever a beta kernel is specified (\texttt{kernel = ``beta''}) alongside a uniform or constant parametric start.

    \subsection*{Julia Implementation (\texttt{BetaKDE.jl})}
        For high-performance and scientific computing applications, the estimator is available in the \href{https://github.com/egonmedhatten/BetaKDE.jl}{\texttt{BetaKDE.jl}} package, which can be installed via the official Julia General Registry. The package is optimized for speed and integrates directly with the broader \texttt{JuliaStats} ecosystem.
        
\end{document}